\newtheorem{theorem}{Property}
\begin{document}

\title{Selective Combining for Hybrid Cooperative Networks}




\author{Qiang~Huo,
        Tianxi~Liu,
        Shaohui~Sun,
        Lingyang~Song,
        and~Bingli~Jiao%
\thanks{Published in IET Communications.
Manuscript received  May 5, 2013; revised August 26, 2013; accepted September 23, 2013.}%
\thanks{Q.~Huo,  L.~Song, and B.~Jiao are with the State Key Laboratory of Advanced Optical Communication Systems and Networks, and the
Wireless Communications and Signal Processing Research Center,
School of Electronics Engineering and Computer Science,
Peking University, Beijing 100871, China (e-mail: \{qiang.huo, lingyang.song, jiaobl\}@pku.edu.cn).}%
\thanks{T.~Liu is with Datang Telecom Technology \& Industry Holdings, Beijing 100871, China (e-mail: Email: liutx@datanggroup.cn).}%
\thanks{S.~Sun is with Peking University, Beijing 100871, China (e-mail: sunshaohui@hotmail.com).}%
\thanks{This work was partially supported by the National 973 project under grant number 2013CB336700,
National Nature Science Foundation of China under grant number 61222104
and 61061130561, the Ph.D. Programs Foundation of Ministry of Education of
China under grant number 20110001110102, and the Opening Project of Key
Laboratory of Cognitive Radio and Information Processing (Guilin University
of Electronic Technology). }%
\thanks{Digital Object Identifier 10.1049/iet-com.2013.0323}%
}

\maketitle



\begin{abstract}
In this study, we consider the selective combining in hybrid cooperative networks~(SCHCNs scheme) with one source node, one destination node and $N$ relay nodes.
In the SCHCN scheme, each relay first adaptively chooses  between amplify-and-forward   protocol and decode-and-forward   protocol on a per frame basis by examining the error-detecting code result, and $N_c$ ($1\leq N_c \leq N$) relays will be selected to forward their received signals to the destination. We first develop a signal-to-noise ratio~(SNR) threshold-based frame error rate~(FER) approximation model. Then, the theoretical FER expressions for the SCHCN scheme are derived by utilizing the proposed SNR threshold-based FER approximation model.
The analytical FER expressions are validated through simulation results.
\end{abstract}

\begin{IEEEkeywords}
Cooperative networks, frame error rate, hybrid forwarding scheme, relay selection, selective combining.
\end{IEEEkeywords}

\IEEEpeerreviewmaketitle




\section{Introduction}\label{sec:introduction}
\IEEEPARstart{C}{ooperative} communication has been widely used to improve system robustness and capacity by allowing nodes to cooperate in their transmissions to form a virtual antenna array~\cite{ Wang2012P289, Cheng2012P388, Sendonaris2003P1927,Sendonaris2003P1939,Laneman2004P3062}.
In \cite{Laneman2004P3062}, several relay protocols have been studied for the wireless cooperative networks, and amplify-and-forward (AF)  and decode-and-forward (DF) are recognized as two basic cooperative transmission protocols.
While using the AF protocol, the relay node amplifies the received signal sent by the source, and then retransmits it to the destination without decoding. In contrast, the DF protocol requires the relay node to decode the received signal and then forward the re-encoded signal to the destination.
However, the performance of the AF  is mainly limited by the noise amplification phenomenon at the relay nodes  
and the DF  may cause serious error propagation problem when the  relay  fails to decode the received signal correctly\cite{Onat2008P4226,Liu2009P1,Duong2009P1,Duong2009P1a}.

Motivated by the above-described disadvantages of the AF and DF protocols, various improved cooperative transmission protocols have been proposed, such as signal-to-noise ratio (SNR) threshold-based selective digital relaying schemes\cite{Onat2007P969,Onat2008P4226,Onat2008P1,Onat2008P4938}, cooperative maximum ratio combining (MRC)-based DF\cite{Wang2005P1051,Wang2007P1427} and smart relaying scheme\cite{Wang2008P1950,Vien2009P2849},  log-likelihood-ratio~(LLR) threshold-based selective DF\cite{Kwon2010P540}, decode-amplify-forward~(DAF)\cite{Bao2005P816,Bao2007P3975}.
A hybrid forward~(HF) protocol was proposed in \cite{Can2006P4520,Li2007P3704,Duong2009P1,Duong2009P1a,Song2009P1,Liu2011P457},  which provides superior performance than both the AF and the DF protocols and thus has received lots of attention  recently.
In \cite{Duong2009P1,Duong2009P1a}, the author studied the symbol error rate~(SER) of the HF protocol by adaptively choosing the AF and the DF protocols on a per symbol basis, which is, however, hard to realise in practice.
This work focuses on a more practical HF protocol on a per frame basis.

In conventional cooperative wireless networks, all $N$ relays will forward the signals received from the source to the destination through $N$
orthogonal channels in the relaying phase\cite{Laneman2003P2415,Anghel2004P1416,Ribeiro2005P1264,Zhao2006P757,Zhao2007P3114}.
Then the destination combines all the   signals received from the source and the relays using  MRC  technique.
In \cite{Zhao2006P757,Zhao2007P3114}, the authors show that the all participate AF~(AP-AF) scheme can achieve the full diversity order of $N+1$. However, MRC, which combines the signals received from all the links in order to have the best possible combining   gain, will certainly result in extremely complicated hardware implementation required for phase coherence and amplitude estimation on each link\cite{Sulyman2000P3,Chen2004P2}. Furthermore, note that the links
with low SNRs may possibly lead to inefficient estimation on both phase and amplitude; hence MRC is sensitive to the channel estimation errors\cite{Li2006P4230}.
To simplify the hardware implementation and make the  communication systems more robust towards channel estimation errors while still retaining good system
performance, Kong \textit{et al.} \cite{Kong1995P426} first proposed the selection combining~(SC) scheme at the receiver, which combines $N_c$ links with the largest instantaneous SNRs out of $N$ links, for a rake receiver operating over a multipath fading channel.
As a result, the SC scheme has a fixed and low processing complexity since it only combines $N_c$ links, instead of all the $N$ links in the network\cite{Chen2004P2}.
In addition, in the SC scheme, the weakest SNR paths are excluded from the combining process to make the system more robust towards channel estimation errors.

Recently, the SC scheme  was extended to the cooperative communication networks by selecting proper number of relay nodes to forward the signals to the destination for the source during the transmission phase.
The work in \cite{Ikki2009P1} and \cite{Ikki2010P1} studies the SER performances of the SC with the AF and the DF protocols on a per symbol basis.
Note that the conventional cooperative scheme\cite{Laneman2003P2415,Anghel2004P1416,Ribeiro2005P1264,Zhao2006P757,Zhao2007P3114}, in which all $N$ relays will forward the signals to the destination, can be viewed as a special case of the SC scheme when $N_c=N$, and the conventional relay selection~(RS)
scheme\cite{Beres2006P1056,Yi2008P1792,Li2007P3704,Song2009P1}, in which only the relay that has the highest SNRs at
the destination is chosen as the active relay, is also a special case of the SC scheme when $N_c=1$. Obviously, the SC scheme is able to describe the cooperative networks in a unified and convenient manner, and
provides insight into the system designs.
However, to the best of our knowledge, no results for  the HF scheme with the SC have been reported in the literature so far. Furthermore, most earlier works have been focusing on SER performance analysis   and limited works for the analytical frame error rate~(FER) performance of the HF scheme, which is more frequently used in the evaluation of the system performance, have been presented.
These motivated our work.

In this paper, we consider a two-hop network with one source node, one destination node and $N$ relay nodes, and with links experiencing independent block Rayleigh fading.
The main contributions of the present work can be briefly summarized
as follows:

\begin{enumerate}[\IEEEsetlabelwidth{3)}]
\item
We propose and study the
selective combining in hybrid cooperative networks~(SCHCN scheme) on a per frame basis. In the proposed SCHCNs scheme, each relay adaptively chooses, frame by frame, between the AF and DF protocols by examining the error-detecting code results. Specifically, we use cyclic redundancy check~(CRC) codes for the AF and DF adaptation in this work. The relays, which decode correctly,  will utilize  the DF protocol, and the rest will operate in the AF mode. Meanwhile, $N_c$ relays that have the highest effective  SNRs at the destination are chosen out of $N$ relays, and then retransmit their received signals to the destination.

\item
We develop an SNR threshold-based FER approximation model  
and then establish the relationship between the outage probability and the FER  
by using a novel analytical SNR threshold.

\item
Using the proposed FER approximation model, a closed-form FER expression  as well as a high SNR asymptotic FER expression for the SCHCN scheme are derived. The derived FER expressions clearly indicate that the SCHCN scheme can achieve full diversity order. Analytical results are verified through simulations. Simulation results show that the best RS scheme outperforms the multiple RS scheme under a total transmit power constraint\footnote{In some scenarios, such as sensor networks, energy efficiency emerges as a critical issue, and we can use the total transmit power constraint to portray the actual limitation.}, whereas the multiple RS scheme provides better performance than the best RS  scheme under an individual transmit power constraint\footnote{Here, we also investigate the case with individual transmit power constraint because in some relay scenarios, for example, fixed relay scenario\cite{Soldani2008P58}, the battery problem is not so sensitive as the mobile user relay scenario.}.
\end{enumerate}

The rest of the paper is organized as follows: we describe the system model and the SCHCN scheme in Section \ref{sec:system_model}. In Section \ref{sec:FER_approx_model}, we  develop an SNR threshold-based FER approximation model  and derive an improved SNR threshold. In Section \ref{sec:FER_GRS}, we calculate the closed-form average FER and the simplified asymptotic FER expressions    for the SCHCN scheme by using the FER approximation model. In Section \ref{sec:simulation}, we present some  simulation results. In Section \ref{sec:conclusion}, we draw the main conclusions.

\emph{\textbf{Notation}}:
For a random variable~(RV) $X$, $Pr(\cdot)$ represents its probability, $f_X(\cdot)$ represents its probability density function (PDF) and $F_X(\cdot)$ denotes its cumulative density function (CDF). Let $\mathcal{L}_{X}(s)$ and $\hat{\mathcal{L}}_{X}(s)$ stand for the Laplace transform of the PDF and CDF of $X$, respectively.
$X\sim \mathcal{CN}(0, \Omega)$ denotes a circular symmetric complex Gaussian variable with a zero mean and variance $\Omega$.  $Q(x)$ represents the $Q$-function, that is, $Q(x)=\frac{1}{\sqrt{2\pi}}\int_x^{\infty}e^{-t^2/2}\,\mathrm{d}t$.



%
%
%



\section{System Model and the SCHCN Scheme}\label{sec:system_model}
\subsection{System Model}  
We consider a general two-hop wireless relay network with $N+2$ terminals consisting of one source node, $N$ relay nodes and one destination node, as shown
in Fig. \ref{fig:SystemDiagram_GRS}.
The channels of all links are assumed to be  quasi-static Rayleigh fading, i.e., the channel is fixed within one frame and changes independently from one frame to another. Let $h_0$, $h_{1,i}$  and $h_{2,i}$ represent the channel coefficients of the source-destination, source-relay~$i$ and relay~$i$-destination channels, respectively. Let $h_0\sim \mathcal{CN}(0,\Omega_0)$, $h_{1i} \sim \mathcal{CN}(0,\Omega_{1i})$ and $h_{2i} \sim
\mathcal{CN}(0,\Omega_{2i})$. Let $n_0$, $n_{1i}$ and
$n_{2i}$ represent the corresponding additive white Gaussian noises (AWGNs). It is  assumed that $n_0 \sim \mathcal{CN}(0,N_0)$, $n_{1i} \sim \mathcal{CN}(0,N_0)$  and $n_{2i} \sim \mathcal{CN}(0,N_0)$. Finally, let $\mathcal{E}_{s}$ and $\mathcal{E}_{r,i}$ denote the average transmit power at the source and the $i$th relay, respectively.


\subsection{Hybrid Forward Scheme}
It is assumed that each terminal in the network is in the half-duplex mode with single antenna.
Signals are transmitted over orthogonal channels, either through time or frequency division, in order to prevent interference.
Thus, one frame transmission in the relaying scheme consists of two separate phases, i.e., the broadcasting~($1$st) phase and the relaying~($2$nd) phase.

\subsubsection{The Broadcasting Phase} 
The source broadcasts a signal to both the $N$ relays and the destination in the broadcasting  phase.
The received signals at the destination and the $i$th relay, at time $t$, denoted by $y_0(t)$ and $y_{1i}(t)$, can be expressed as
\begin{equation}
y_0(t) = \sqrt{\mathcal{E}_s}h_0 s(t) + n_0(t),
\end{equation}
and
\begin{equation}
y_{1i}(t) = \sqrt{\mathcal{E}_s} h_{1i} s(t) +n_{1i}(t),
\end{equation}
respectively, where $s(t)$ is the signal transmitted   at the source. The corresponding instantaneous SNRs
are given as
$\gamma_0 =\frac{\mathcal{E}_s}{N_0} |h_0|^2 = \tilde{\gamma}_0 |h_0|^2$  and $\gamma_{1i} = \frac{ \mathcal{E}_s }{N_0}|h_{1i}|^2 = \tilde{\gamma}_0 |h_{1i}|^2$,  respectively, where $\tilde{\gamma}_0 = \frac{\mathcal{E}_s}{N_0}$.

\subsubsection{The Relaying Phase}  
During the relaying phase, each relay adaptively chooses, frame by frame, between the AF and DF protocols by examining the error-detecting code result.  We use CRC codes for the AF and DF adaptation in this paper. 
The relays which decode correctly, i.e., the CRC checking result is correct, are included in the DF group, denoted by $\mathcal{G}_{DF}$, and the rest are included in the AF group, denoted by $\mathcal{G}_{AF}$.

The corresponding received signal at the destination, denoted by $y_{2i}(t)$ can be written as
\begin{equation}
y_{2i}(t) = \sqrt{\mathcal{E}_{r,i}} h_{2i} x_i(t) +n_{2i}(t),
\end{equation}
where $x_i(t)$ represents the signal transmitted from the $i$th relay. Note that the transmit  signal $x_i(t)$
satisfies the following power constraint, $\mathbb{E}(\lvert x_i(t) \rvert ^2)\leq 1$.
Hence, the corresponding transmit  signal $x_i(t)$  can be expressed as\cite{Laneman2004P3062}
\begin{equation}
\begin{split}
x_i(t) = \begin{cases}
\frac{y_{1i}(t)}{\sqrt{\mathcal{E}_s |h_{1i}|^2+N_0}} & \text{if $i \in \mathcal{G}_{AF},$ } \\
s(t) & \text{if $ i \in \mathcal{G}_{DF}.$ }
\end{cases}
\end{split}
\end{equation}

Therefore the instantaneous SNR of the link through the $i$th relay node, denoted
by $\gamma_i$, is given as \cite{Liu2012P779}
\begin{equation} \label{eq:SNR_i_cases}
\begin{split}
\gamma_i &= \begin{cases}
\frac{\gamma_{1i}\gamma_{2i}}{\gamma_{1i}+\gamma_{2i}+1} & \text{if $i \in \mathcal{G}_{AF}$ }, \\
\gamma_{2i} & \text{if $ i \in \mathcal{G}_{DF}$ },
\end{cases} \\
\end{split}
\end{equation}
where $\gamma_{2i}$ is the instantaneous SNR of the link between the $i$th relay and
the destination, which is given by $\gamma_{2i} = \frac{ \mathcal{E}_{r,i} }{N_0} |h_{2i}|^2 = \tilde{\gamma}_{2i}|h_{2i}|^2$  and  $\tilde{\gamma}_{2i} = \frac{\mathcal{E}_{r,i}}{N_0}$.

\subsection{SCHCN Scheme}

On receiving the signal sent from the source, each relay needs to send one bit indicator to inform the destination that it uses the AF or DF protocols. In the SCHCN scheme, destination then calculates the overall received SNR from each relay accordingly and selects the $N_c$ relays with the largest effective SNRs. After finding the optimum $N_c$ relays, through a feedback channel, the destination will inform which relays are selected for transmission during the relaying phrase and other unselected relays will be in an idle state.
Let $\gamma_{(1)}\geq \gamma_{(2)}\geq\dotsb\geq \gamma_{(N)} $ represent the decreasing ordered SNRs of the $\{\gamma_n\}_{n=1}^{N}$. Then, the instantaneous SNR of the  MRC  output at the destination is given by\footnote{It it noteworthy that, during the pilot phase, the destination has to estimate $N$ channels from the relay nodes and one channel from the source, and then selects the $N_c$ best relay nodes. Thus the complexity of the pilot phase, is mainly dominated by the $N+1$ channel estimation process. The exact complexity is dependent on the channel estimation approach, for example, when using zero-forcing~(ZF) to estimate the channel, one needs $N+1$ floating-point operations. Furthermore, during the data transmission phase, the destination has to combine $N_c$ links from the relay nodes as well as the direct link from the source by using the MRC technique. Thus, the complexity of the data combining process requires $N_c+1$ multiplies and $N_c$ additions, i.e., $2N_c+1$ floating-point operations.}\cite{Rappaport2002P}

\begin{equation}\label{eq:SNR_GRS_MRC}
\begin{split}
\gamma_{SCHCN} = \gamma_{0} + \sum_{n=1}^{N_c}\gamma_{(n)}=\gamma_{0}+\gamma_{coop},
\end{split}
\end{equation}
where $\gamma_{coop}=\sum_{n=1}^{N_c}\gamma_{(n)}$.

It is  assumed that  all channel state information (CSI) needed for decoding is
available at the relay nodes and the destination node\cite{Laneman2004P3062,Wang2007P1427,Vien2009P2849}. This can be easily achieved in practice. For example, the relay acquires the CSI of the source-relay channel via the pilot symbol sent from the source and then transmits it to the destination via a feedback channel. Similarly, the destination can acquire the CSI of the relay-destination via the pilot symbol sent from the relay.
For simplicity, let $\lambda_{0} = \frac{1}{\tilde{\gamma}_0 \Omega_{0}}$, $\lambda_{1i} = \frac{1}{\tilde{\gamma}_0 \Omega_{1i}}$, and $ \lambda_{2i} = \frac{1}{\tilde{\gamma}_{2i} \Omega_{2i}}$.
As $\gamma_0$, $\gamma_{1i}$ and $\gamma_{2i}$ are exponentially
distributed, their PDFs  can be expressed as
\begin{equation}
f_{\gamma_k}(\gamma) = \lambda_{k} e^{- \lambda_{k} \gamma }, k \in
\{0,1i,2i\}.
\end{equation}
It is  also assumed that $\{\gamma_{1i}\}_{i=1}^{N}$  ($\{\gamma_{2i}\}_{i=1}^{N}$) are independent and identically distributed (i.i.d.) RVs and let $\lambda_{1i}= \lambda_{sr}$ and $\lambda_{2i} =\lambda_{rd}$ for analysis tractability in this work.



\section{SNR Threshold-Based FER Approximation Model} \label{sec:FER_approx_model}
In this section,  an SNR threshold-based FER approximation model is proposed for 
diversity systems. As we know, cooperative systems can be viewed as a special kind of general diversity systems \cite{Laneman2004P3062}. Therefore, we can apply the results obtained in this section to the FER analysis of the SCHCN scheme presented in next section.

In the following, the SNR threshold-based FER approximation model is described in Subsection \ref{subsec:FER_model_introduction},  some existing related works are presented in Subsection \ref{subsec.previous.SNR}, and then   an improved criterion and the corresponding SNR threshold are presented in Subsection \ref{subsec:improved_SNR}.
Finally,  a multiple-input multiple-output~(MIMO) system using the proposed SNR threshold-based FER approximation model is presented as a simple example in Subsection \ref{subsec:mimo}.

\subsection{SNR threshold-based FER approximation model} \label{subsec:FER_model_introduction}

Note that the average FER, represented by $\bar{P}_f$,
can be calculated by integrating the instantaneous FER in AWGN
channel, denoted by $P_f^G(\gamma)$, over the fading
distribution \cite{Proakis2001P}
\begin{equation} \label{eq:P_f_B}
\bar{P}_f(\bar{\gamma}) = \int_{0}^{\infty} P_f^G(\gamma) f_{\gamma}(\gamma,\bar{\gamma})
\text{d} \gamma,
\end{equation}
where $\gamma$ and $\bar{\gamma}$ represent  the instantaneous and average SNR, respectively, and $f_{\gamma}(\cdot)$ represents
the PDF of $\gamma$.

Eq. (\ref{eq:P_f_B}) is an exact expression to calculate $\bar{P}_f $; however,
its closed-form expression is still difficult to be derived. The SNR threshold-based FER approximation model is an  accurate and simple approach to evaluate FER. Assume that the instantaneous FER is $1$ when the
instantaneous SNR $\gamma$ is below a given  SNR threshold $\gamma_t$,
otherwise it is $0$ \cite{ElGamal2001P671, Chatzigeorgiou2008P577,Chatzigeorgiou2009P216}, i.e.,
\begin{equation}
P_f^G(\gamma|\gamma \leq \gamma_t) \approx 1 \; \text{and} \;
P_f^G(\gamma|\gamma > \gamma_t) \approx 0 .
\end{equation}
Then, the average FER is given  as\cite{ElGamal2001P671,Chatzigeorgiou2008P577,Chatzigeorgiou2009P216}
\begin{equation} \begin{split} \label{eq:ThresholdModelFER}
\bar{P}_f(\bar{\gamma})
 &= \int_{0}^{\gamma_t} P_f^G(\gamma) f_{\gamma}(\gamma,\bar{\gamma})\text{d} \gamma
+ \int_{\gamma_t}^{\infty} P_f^G(\gamma) f_{\gamma}(\gamma,\bar{\gamma})\text{d} \gamma \\
&\approx \int_{0}^{\gamma_t} f_{\gamma}(\gamma,\bar{\gamma}) \text{d} \gamma =
F_{\gamma}(\gamma_t,\bar{\gamma}),
\end{split} \end{equation}
where  $F_{\gamma}(\cdot)$ represents the
CDF of $\gamma$.

According to the SNR threshold-based FER approximation model, the average
FER is approximately calculated as an outage probability. Hence the accuracy of the FER model is principally determined by the value of the given SNR threshold.

\subsection{Related Works}\label{subsec.previous.SNR}

In \cite{ElGamal2001P671}, 
the SNR threshold-based FER approximation model is applied to the iteratively decoded systems with turbo codes. It is shown that the optimal SNR threshold coincides with the convergence threshold of the iterative turbo decoder. 
Furthermore, the SNR threshold-based FER model is extended to non-iterative coded and uncoded systems in \cite{Chatzigeorgiou2008P577,Chatzigeorgiou2009P216}. 
The \emph{minimum absolute error sum criterion} is adopted in \cite{Chatzigeorgiou2008P577,Chatzigeorgiou2009P216} to minimize the sum of absolute error
\begin{equation} \label{eq:AbsoluteCriterion}
\gamma_{t} = \text{arg} \; \min_{\gamma}\left\{ \int_{0}^{\infty} \left|
\bar{P}_{f}(\bar{\gamma}) - F_{\gamma}(\gamma,\bar{\gamma})\right|
\text{d} \bar{\gamma} \right\} \;,
\end{equation}
where $\bar{\gamma}$ denotes the average SNR, and the SNR threshold is given as
\begin{equation} \label{eq.SNR.threshod.old}
\gamma_t = \left(\int_{0}^{\infty} \frac{1-P_f^G(\gamma)}{\gamma^2}
\text{d} \gamma \right)^{-1}.
\end{equation}

Since the \emph{minimum absolute error sum criterion} does not consider the fact that the FER decreases more quickly at high SNR region in high diversity order systems, it needs to be improved for general diversity order systems.

\subsection{Improved SNR Threshold}\label{subsec:improved_SNR}
A new SNR threshold-based FER approximation model is developed for the  diversity systems based on two important aspects.

Firstly, a \emph{minimum relative error
sum criterion}, which taking into account the fact that the FER
decreases quickly when SNR increases and minimizing the sum of \emph{relative} error, is used instead of the \emph{minimum absolute error sum criterion} in Eq. (\ref{eq:AbsoluteCriterion}) as
\begin{equation} \label{eq:RelativeCriterion}
\gamma_{t} = \text{arg} \; \min_{\gamma}\left\{ \int_{0}^{\infty} \left|
\frac{ \bar{P}_{f}(\bar{\gamma}) -
F_{\gamma}(\gamma,\bar{\gamma})}{\bar{P}_{f}(\bar{\gamma})}\right|
\text{d} \bar{\gamma} \right\} \; .
\end{equation}

In the Appendix \ref{Proof:SNR_threshold}, we will show that the \emph{minimum relative error
sum criterion} will lead to a \emph{minimum relative
error criterion at high SNR} as
\begin{equation}
\label{eq:ZECriterion} \gamma_{t} = \text{arg} \lim_{\substack{\gamma},\bar{\gamma}
\rightarrow \infty   } \left\{\bar{P}_{f}(\bar{\gamma})
-F_{\gamma}(\gamma,\bar{\gamma}) =0 \right\}.
\end{equation}

Secondly, the calculation of SNR threshold should take into account the factor of the
diversity order. We note that  in a Rayleigh fading channel, the CDF of SNR is given by $F_{\gamma}(\gamma, \bar{\gamma}) = 1- e^{-\gamma / \bar{\gamma}}$, then we have $\lim_{\bar{\gamma} \rightarrow \infty }F_{\gamma}(\gamma, \bar{\gamma})
= {\bar{\gamma}}^{-1} \gamma $.
 Similarly,  in a Nakagami-$m$ fading channel, the CDF of SNR is   $F_{\gamma}(\gamma, \bar{\gamma}) = \frac{\Gamma(m,m\gamma/\bar{\gamma})}{\Gamma(m)}$, where $\Gamma(m)$ denotes the Gamma function, and $\Gamma(m,x)$ denotes the lower part incomplete Gamma function given by  $\Gamma(m,x)=\int_{0}^x t^{m-1}e^{-t}\text{d}t$.
 At high SNR, we have    $\lim_{\bar{\gamma} \rightarrow \infty }F_{\gamma}(\gamma, \bar{\gamma})
= \Gamma(m)^{-1}\int_{0}^{m\gamma/\bar{\gamma}} t^{m-1}\text{d}t =\frac{m^{m-1}}{ \Gamma(m) (\bar{\gamma})^m } \gamma^m $. Generally speaking,   the CDF of a system with a
diversity order of $d$ at high SNR can be approximated as \cite{Chatzigeorgiou2009P216,Zheng2003P1073,Rodrigues2008P449}:
\begin{equation} \label{eq:CDFLimit}
\lim_{\bar{\gamma} \rightarrow \infty}F_{\gamma}(\gamma, \bar{\gamma})
\rightarrow G(\bar{\gamma}) \gamma ^d,
\end{equation}
where $G(\bar{\gamma})$ represents a constant related to $\bar{\gamma}$.

Combining Eq. (\ref{eq:P_f_B}), Eq. (\ref{eq:ZECriterion})
and Eq. (\ref{eq:CDFLimit}), we derive the closed-form improved SNR threshold as
\begin{equation} \label{eq:SNR_threshold_continue}
\begin{split}
\gamma_{t,d} &= \text{arg} \lim_{\gamma_t,\bar{\gamma} \rightarrow \infty}
\left\{ \int_{0}^{\infty} f_{\gamma}(\gamma, \bar{\gamma})
P_f^G(\gamma) \text{d} \gamma
- G(\bar{\gamma}) \gamma_t ^d= 0\right\} \\
&=\text{arg} \lim_{\gamma_t,\bar{\gamma} \rightarrow \infty} \left\{
\int_{0}^{\infty} G(\bar{\gamma}) d \gamma ^{d-1} P_f^G(\gamma)
\text{d} \gamma - G(\bar{\gamma}) \gamma_t ^d
= 0\right\} \\
&= \text{arg} \lim_{\gamma_t,\bar{\gamma} \rightarrow \infty} \left\{ {\gamma_t}^d = d \int_{0}^{\infty} \gamma^{d-1} P_f^G(\gamma) \text{d} \gamma \right\} \\
&= \left( d \int_{0}^{\infty} \gamma^{d-1} P_f^G(\gamma) \text{d}
\gamma \right)^{1/d}.
\end{split}\end{equation}

Note that $ P_f^G(\gamma) $ can be calculated in closed-form for uncoded systems. For example, for linear modulation, $ P_f^G(\gamma) $ is given as\cite{Proakis2001P,Chatzigeorgiou2008P577}
\begin{equation} \label{eq:P_f_G_uncode}
P_f^G(\gamma) = 1 - \left(1 - Q\left(\sqrt{c \gamma}\right)\right)^L,
\end{equation}
where $c$ represents modulation constant ($c=2$ for binary-phase-shift-keying (BPSK)), $Q(\cdot)$ represents the $Q$ function, and $L$ denotes the frame length.

Substituting Eq. (\ref{eq:P_f_G_uncode}) into Eq. (\ref{eq:SNR_threshold_continue}),   the SNR threshold $\gamma_{t,d}$ for uncoded systems is given as:
\begin{equation} \label{eq:SNR_threshold_uncoded}
\gamma_{t,d} = \left( d \int_{0}^{\infty} \gamma^{d-1} \left(1 - \left(1 - Q\left(\sqrt{c \gamma}\right)\right)^L \right) \text{d}
\gamma \right)^{1/d}.
\end{equation}
In Table~\ref{table:SNR threshold}, we present the SNR thresholds by using Eq. (\ref{eq:SNR_threshold_uncoded}) for the systems with different diversity orders when  $c=2$. 


We note that when substituting $c=2$ (BPSK), $L=1$ and $d=1$  into Eq. (\ref{eq:SNR_threshold_uncoded}) and with the help of \cite[Eq. (6.313.2)]{Gradshteyn2007P}, we have $\gamma_{t,1}=\frac{1}{4}$  and $\bar{P}_f = 1-e^{-\frac{1}{4\bar{\gamma}}} \approx \frac{1}{4\bar{\gamma}}$, which is equivalent to the  average BER of BPSK over Rayleigh fading channel at high SNR: $\bar{P}_b=\frac{1}{2}(1-\sqrt{\frac{\bar{\gamma}}{1+\bar{\gamma}}}) \approx \frac{1}{4\bar{\gamma}}$ \cite{Proakis2001P}. 

\subsection{Simple Example of MIMO Systems}\label{subsec:mimo}

Let us use a  MIMO system as a simple example to illustrate the  usage of the developed SNR threshold-based FER model.
Assume that the MIMO system has $N_T$ inputs and $N_R$ outputs, and the transmitter uses space-time block coding~(STBC) \cite{Alamouti1998P1451}. The receiver use MRC technique to combine the $N=N_T N_R$ independent fading paths, and the PDF and CDF of the instantaneous SNR, represented by $\gamma$, at the output of the combiner are given by\cite{Proakis2001P,Chatzigeorgiou2009P216}:
\begin{equation}
f_{\gamma}(\gamma, \bar{\gamma}) = \frac{\gamma^{N-1}e^{-\gamma/(\bar{\gamma}/N_T)}}{(\bar{\gamma}/N_T)^N (N-1)!},
\end{equation}
and
\begin{equation}\label{eq:MIMO.CDF}
\begin{split}
F_{\gamma}(\gamma, \bar{\gamma})
&= 1- e^{-\gamma N_T/\bar{\gamma}}\sum_{k=0}^{N-1}\frac{(\gamma N_T /\bar{\gamma})^k}{k!},
\end{split}
\end{equation}
respectively, where $\bar{\gamma}$ represents the average SNR per receive antenna.

It should be mentioned here that the approximation in Eq. (\ref{eq:CDFLimit}) still holds for the MIMO system.
\begin{IEEEproof}
With the fact that $\lim_{\bar{\gamma} \rightarrow \infty}e^{-\gamma N_T/\bar{\gamma}}=1$, we have
\begin{equation}
 \begin{split}
\lim_{\bar{\gamma} \rightarrow \infty} F_{\gamma}(\gamma, \bar{\gamma})
& = \lim_{\bar{\gamma} \rightarrow \infty}\int_0^{\gamma} f_{\gamma}(t, \bar{\gamma}) \text{d} t\\
& = \int_0^{\gamma} \frac{t^{N-1}}{(\bar{\gamma}/N_T)^N (N-1)!} \text{d} t\\
&=\frac{1}{(\bar{\gamma}/N_T)^N N!} \gamma^{N}\\
&
=G(\bar{\gamma}) \gamma^{N},
\end{split}
\end{equation}
where $G(\bar{\gamma}) =\frac{1}{(\bar{\gamma}/N_T)^N N!}$.
\end{IEEEproof}
In the following, with the  derived  SNR threshold $\gamma_{t,N}$   in Subsection \ref{subsec:improved_SNR},  the
FER of the MIMO system in quasi-static fading channel  is
approximately calculated as
\begin{equation}
\begin{split}
\bar{P}_f(\bar{\gamma}) &\approx   F_{\gamma}(\gamma_{t,N}, \bar{\gamma})\\
&= 1- e^{-\gamma_{t,N}N_T/\bar{\gamma}}\sum_{k=0}^{N-1}\frac{(\gamma_{t,N} N_T /\bar{\gamma})^k}{k!}.
\end{split}
\end{equation}

In the simulation section, we compare the proposed SNR threshold obtained in Eq. (\ref{eq:SNR_threshold_continue}) and the optimal SNR threshold obtained in Eq. (\ref{eq:opt.SNR.threshold})\footnote{ $F^{-1}_{\gamma}(\, \cdot \, , \bar{\gamma})$  represents the inverse function of  $F_{\gamma}(\gamma, \bar{\gamma})$ with respect to $\gamma$.} by using numerical method for the MIMO system. Simulation results show that the proposed SNR threshold converges well with the optimal SNR threshold especially at medium and high SNRs, which justifies the proposed FER approximation model.
\begin{equation}\label{eq:opt.SNR.threshold}
\begin{split}
 \gamma^{opt}_{t,N} &=   F^{-1}_{\gamma}(\bar{P}_f(\bar{\gamma}), \bar{\gamma}).\\
\end{split}
\end{equation}

So far,  an improved SNR threshold-based FER approximation model is developed
for   diversity systems. In this developed FER model, the average FER is approximately calculated as an
outage probability as Eq. (\ref{eq:ThresholdModelFER}), the improved SNR threshold can be calculated using  Eq. (\ref{eq:SNR_threshold_continue}) for general systems and can be calculated using Eq. (\ref{eq:SNR_threshold_uncoded})  for uncoded systems. 
In the next section, we will apply this model to the FER analysis of the SCHCN scheme for  cooperative systems.
\emph{We should emphasize} that our analysis in this work can be easily extended to the coded systems. However, the corresponding details for coded systems are omitted here because of limited space and please refer to \cite{Liu2012P779,Huo2010P1} for details.



\section{FER Analysis}\label{sec:FER_GRS}
We first apply the SNR threshold-based FER approximation model to calculate  the closed-form analytical FER of the SCHCN scheme in Subsection \ref{subsec:FER_HF_GRS}.
We then present the asymptotic FER expression for the SCHCN scheme  in Subsection \ref{subsec:FER_HF_GRS_at_High_SNR}.

\subsection{Closed-form analytical FER  }\label{subsec:FER_HF_GRS} 
The  following properties for the SCHCN scheme are first established before performing FER calculation.
\begin{theorem} \label{theorem:CDF_HF_SNR}
The CDF of the end-to-end SNR of the indirect link $S\rightarrow R_i \rightarrow D$ for the SCHCN scheme can be expressed as  
\begin{equation} \begin{split} \label{eq:CDF_HF_r_i}
F_{\gamma_i}(\gamma ) &\approx
\begin{cases}
1- e^{-(\lambda_{sr}+\lambda_{rd})\gamma }, &\text{if } \gamma < \gamma_{t,1},\\
1-e^{-\lambda_{sr} \gamma_{t,1}}e^{-\lambda_{rd}\gamma}, &\text{if } \gamma \geq \gamma_{t,1}.
\end{cases}
\end{split} \end{equation}
For simplicity, Eq. (\ref{eq:CDF_HF_r_i}) can be further approximated by  an exponential function to extend the range of $\gamma$ from $\gamma_{t,1}$ to $\gamma_{t,d}$
\begin{equation}  \label{eq:CDF_HF_r_i_Approximation}
F_{\gamma_i}(\gamma )  \approx
1- e^{- \lambda_{eq} \gamma }, \; 0\leq\gamma\leq\gamma_{t,d},
\end{equation}
where the lower bound and upper bound of $\lambda_{eq}$ are $\lambda_{sr} \gamma_{t,1}/\gamma_{t,d} +\lambda_{rd}$ and $\lambda_{sr} +\lambda_{rd}$, respectively. From Table~\ref{table:SNR threshold}, we note that $\gamma_{t,1}/\gamma_{t,d}\approx 1$ which implies that either the lower bound or upper bound can be used to approximate Eq. (\ref{eq:CDF_HF_r_i}) and the approximation  is tight.
\end{theorem}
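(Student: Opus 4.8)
The plan is to compute the CDF of $\gamma_i$ by conditioning on whether relay $i$ lies in $\mathcal{G}_{DF}$ or $\mathcal{G}_{AF}$, and to replace the exact (but unwieldy) AF expression by the SNR threshold-based approximation developed in Section~\ref{sec:FER_approx_model}. First I would recall that relay $i$ is in $\mathcal{G}_{DF}$ precisely when the source--relay frame is decoded correctly; applying the SNR threshold-based FER model of Eq.~(\ref{eq:ThresholdModelFER}) to the single source--relay hop, this event is approximated by $\{\gamma_{1i} \geq \gamma_{t,1}\}$ (diversity order $d=1$ on that hop), which occurs with probability $e^{-\lambda_{sr}\gamma_{t,1}}$; otherwise $\gamma_{1i} < \gamma_{t,1}$. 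On the DF event, $\gamma_i = \gamma_{2i}$; on the AF event, $\gamma_i = \frac{\gamma_{1i}\gamma_{2i}}{\gamma_{1i}+\gamma_{2i}+1}$, and since in the AF case $\gamma_{1i}$ is already small (below $\gamma_{t,1}$), the standard harmonic-mean bound $\frac{\gamma_{1i}\gamma_{2i}}{\gamma_{1i}+\gamma_{2i}+1}\le \min(\gamma_{1i},\gamma_{2i})$ together with $\gamma_{1i}<\gamma_{t,1}$ lets me treat the AF link essentially as limited by $\gamma_{1i}$ in the regime $\gamma<\gamma_{t,1}$.

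Then I would assemble the two cases. For $\gamma \geq \gamma_{t,1}$: a relay outage at level $\gamma$ can only come from the DF branch with $\gamma_{2i}<\gamma$ (the AF branch is always in outage here since its SNR is below $\gamma_{t,1}\le\gamma$), so $\Pr(\gamma_i<\gamma) \approx \Pr(\gamma_{1i}<\gamma_{t,1}) + \Pr(\gamma_{1i}\ge\gamma_{t,1})\Pr(\gamma_{2i}<\gamma) = (1-e^{-\lambda_{sr}\gamma_{t,1}}) + e^{-\lambda_{sr}\gamma_{t,1}}(1-e^{-\lambda_{rd}\gamma}) = 1 - e^{-\lambda_{sr}\gamma_{t,1}}e^{-\lambda_{rd}\gamma}$, which is the second line of Eq.~(\ref{eq:CDF_HF_r_i}). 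For $\gamma<\gamma_{t,1}$: both branches can contribute, and combining $\Pr(\gamma_{1i}<\gamma)$ from the AF side with the DF contribution $\Pr(\gamma_{1i}\ge\gamma_{t,1})\Pr(\gamma_{2i}<\gamma)$, plus the residual AF mass with $\gamma\le\gamma_{1i}<\gamma_{t,1}$ but $\gamma_{2i}$ small, collapses (to first order, using $1-e^{-x}\approx x$ for the relevant small quantities, or equivalently treating the two hops as independent exponentials with rates $\lambda_{sr}$ and $\lambda_{rd}$) to $1-e^{-(\lambda_{sr}+\lambda_{rd})\gamma}$, the first line. The exponential re-approximation Eq.~(\ref{eq:CDF_HF_r_i_Approximation}) then follows by choosing a single rate $\lambda_{eq}$ so that $1-e^{-\lambda_{eq}\gamma}$ matches the piecewise form: equating at $\gamma=\gamma_{t,1}$ forces $\lambda_{eq}=\lambda_{sr}+\lambda_{rd}$ on $[0,\gamma_{t,1}]$, while extending the upper line $1-e^{-\lambda_{sr}\gamma_{t,1}}e^{-\lambda_{rd}\gamma}$ out to $\gamma_{t,d}$ and demanding it equal $1-e^{-\lambda_{eq}\gamma_{t,d}}$ gives $\lambda_{eq}=\lambda_{sr}\gamma_{t,1}/\gamma_{t,d}+\lambda_{rd}$; these are exactly the stated lower and upper bounds, and $\gamma_{t,1}/\gamma_{t,d}\approx1$ from Table~\ref{table:SNR threshold} makes them nearly coincide.

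The main obstacle is justifying the transition region rigorously: for $\gamma<\gamma_{t,1}$ the AF link SNR $\frac{\gamma_{1i}\gamma_{2i}}{\gamma_{1i}+\gamma_{2i}+1}$ is not simply a single exponential, and the claim that its outage probability combines with the DF term to yield the clean rate $\lambda_{sr}+\lambda_{rd}$ relies on a small-$\gamma$ / high-SNR linearization rather than an exact identity — so I would be careful to state this as an approximation consistent with the FER model's own accuracy (it is, after all, only used inside an outage-probability approximation of the FER), and to note that the error is of the same order as the threshold-model error already accepted in Eq.~(\ref{eq:ThresholdModelFER}). A secondary point worth a line is that $\gamma_{t,1}$ is the correct threshold for the source--relay hop because that hop has diversity order one, whereas $\gamma_{t,d}$ (with $d$ the overall system diversity order) is what appears when the relay link is later fed into the MRC combiner — reconciling these two thresholds is precisely what the second approximation Eq.~(\ref{eq:CDF_HF_r_i_Approximation}) accomplishes.
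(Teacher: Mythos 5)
Your proposal follows essentially the same route as the paper's Appendix B: condition on group membership via the threshold event $\{\gamma_{1i}\ge\gamma_{t,1}\}$, replace the AF SNR by $\min(\gamma_{1i},\gamma_{2i})$ with the truncated conditional law of $\gamma_{1i}$ given $i\in\mathcal{G}_{AF}$, assemble the two regimes, and then match an equivalent exponential rate over $[\gamma_{t,1},\gamma_{t,d}]$ to get the stated bounds on $\lambda_{eq}$. The one point worth correcting is your self-identified ``main obstacle'': no small-$\gamma$ linearization is needed for $\gamma<\gamma_{t,1}$, since decomposing the outage event as $\{\gamma_{1i}<\gamma\}\cup\{\gamma\le\gamma_{1i}<\gamma_{t,1},\,\gamma_{2i}<\gamma\}\cup\{\gamma_{1i}\ge\gamma_{t,1},\,\gamma_{2i}<\gamma\}$ and summing the corresponding probabilities telescopes \emph{exactly} to $1-e^{-(\lambda_{sr}+\lambda_{rd})\gamma}$, so the only approximations actually in force are the threshold model and the $\min$ substitution, just as in the paper.
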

\begin{proof}[Proof]
Please see Appendix \ref{Proof:CDF_HF_SNR}.
\end{proof}

\begin{theorem} \label{theorem:math_PDF}
 Let  $\{\gamma_{i}\}_{i=1}^{N}$ be  i.i.d.  exponential RVs, represented by $\gamma_i\sim\mathcal{E}(\lambda_{eq})$. Let $\gamma_{(1)}\geq \gamma_{(2)}\geq\dotsb\geq \gamma_{(N)}$ denote the order statistics obtained by arranging the $\{\gamma_n\}_{n=1}^{N}$ in a decreasing order.
According to Eq. (\ref{eq:SNR_GRS_MRC}), the CDF of $\gamma_{SCHCN}$ in closed-form expression is derived   for the two cases of both $\lambda_{0}\neq \lambda_{eq}$ and $\lambda_{0}=\lambda_{eq}$ as follows
\paragraph{Case of $\lambda_{0}\neq \lambda_{eq}$}
\begin{equation} \begin{split}  \label{eq:CDF_SNR_GRS_case1}
F_{\gamma_{SCHCN} }(\gamma,\lambda_{eq}) &=\sum_{i=1}^{N_c} \frac{ \alpha_i } {\lambda_{eq} ^{i}}\left[ 1-e^{-\lambda_{eq}  \gamma}\sum_{m=0}^{i-1}  \frac{(\lambda_{eq} \gamma)^m}{m!}\right]  \\
&\quad +\sum_{j=1}^{N-N_c} \frac{\beta_j}{(1+\frac{j}{N_c})\lambda_{eq}}(1-e^{-(1+\frac{j}{N_c})\lambda_{eq}\gamma})  \\
&\qquad + \frac{\beta_0}{\lambda_0}(1-e^{-\lambda_{0}\gamma}), \\
\end{split} \end{equation}
where\footnote{Note that as a special case of $N_c = N$, the following parameters are reduced to
$c= \lambda_{0}\lambda_{eq}^N$, $\theta_0 =1$, $\theta_k =0$,   $\alpha_i=\frac{c}{(N-i)!} \left[\frac{ \theta_0(-1)^{N-i}(N-i)!}{(-\lambda_{eq}+\lambda_{0})^{N-i+1}}  \right]$,   $\beta_0=\frac{c \cdot \theta_0 }{(\lambda_{eq}-\lambda_{0})^{N} }$, and $\beta_j =0$.}   $c=\frac{N!}{N_c!N_c^{N-N_c}}\lambda_{0}\lambda_{eq}^N$, $\theta_0=\frac{1}{{\prod_{n=1}^{N-N_c} [(1+\frac{n}{N_c})\lambda_{eq}-\lambda_{0}]}}$, $\theta_k=\frac{1}{[-(1+\frac{k}{N_c})\lambda_{eq}+\lambda_{0}]\prod_{n=1,n \neq k}^{N-N_c} [(\frac{n-k}{N_c})\lambda_{eq}]}$, $\alpha_i = \frac{c}{(N_c-i)!} \left[\frac{\theta_0(-1)^{N_c-i}(N_c-i)!}{(-\lambda_{eq}+\lambda_{0})^{N_c-i+1}}   +\sum_{k=1}^{N-N_c} \frac{\theta_k(-1)^{N_c-i}(N_c-i)!}{( \frac{k}{N_c} \lambda_{eq})^{N_c-i+1}}  \right]$, $\beta_0=\frac{c\cdot \theta_0}{(\lambda_{eq}-\lambda_{0})^{N_c} }$  and $\beta_j=\frac{c\cdot \theta_j}{  (-\frac{j}{N_c}\lambda_{eq})^{N_c} }$.

\paragraph{Case of $\lambda_{0}=\lambda_{eq}$}
\begin{equation} \begin{split}\label{eq:CDF_SNR_GRS_case2}
F_{\gamma_{SCHCN} }(\gamma,\lambda_{eq}) &=\sum_{i=1}^{N_c+1} \frac{ \alpha_i^* } {\lambda_{eq} ^{i}}\left[ 1-e^{-\lambda_{0} \gamma}\sum_{m=0}^{i-1} \frac{(\lambda_{0} \gamma)^m}{m!}\right]  \\
&\quad +\sum_{j=1}^{N-N_c} \frac{\beta_j^*}{(1+\frac{j}{N_c})\lambda_{0}}(1-e^{-(1+\frac{j}{N_c})\lambda_{0}\gamma}), \\
\end{split} \end{equation}
where\footnote{It is noteworthy that  as a special case of $N_c = N$, the following parameters are reduced to
$c^*= \lambda_{0}^{N+1}$, $\theta_k^* =0$,   $\alpha_i^*=0$~($i\neq N+1$), $\alpha_{N+1}^*=c^*$,  and $\beta_j^* =0$.}  $c^*=\frac{N!}{N_c!N_c^{N-N_c}}\lambda_{0}^{N+1}$, $\theta_k^* =\frac{1}{ \prod_{n=1,n \neq k}^{N-N_c} [(\frac{n-k}{N_c})\lambda_{0}]}$, $\alpha_i^*  =\frac{c^*}{(N_c+1-i)!} \left[ \sum_{k=1}^{N-N_c} \frac{\theta_k^*(-1)^{N_c+1-i}(N_c+1-i)!}{( \frac{k}{N_c} \lambda_{0})^{N_c-i+2}} \right]$   and $\beta_j^* =\frac{c^*\cdot\theta_j^*}{ (-\frac{j}{N_c}\lambda_{0})^{N_c+1} }$.
\end{theorem}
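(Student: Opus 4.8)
The plan is to compute everything through Laplace transforms of densities. Since $\gamma_{SCHCN}=\gamma_0+\gamma_{coop}$ with $\gamma_0\sim\mathcal{E}(\lambda_0)$ independent of $\gamma_{coop}=\sum_{n=1}^{N_c}\gamma_{(n)}$ (the $\gamma_{(n)}$ are functions of the relay SNRs only), we have $\mathcal{L}_{\gamma_{SCHCN}}(s)=\frac{\lambda_0}{s+\lambda_0}\,\mathcal{L}_{\gamma_{coop}}(s)$, so the proof reduces to (i) finding $\mathcal{L}_{\gamma_{coop}}(s)$ and (ii) inverting the product by partial fractions and integrating to get the CDF. Both cases in the statement come out of the same computation, the only difference being the multiplicity of one pole.

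For step (i) I would use the classical spacing (R\'enyi) representation of exponential order statistics. Put $\delta_m=\gamma_{(m)}-\gamma_{(m+1)}$ for $m=1,\dots,N-1$ and $\delta_N=\gamma_{(N)}$; then the $\delta_m$ are mutually independent with $\delta_m\sim\mathcal{E}(m\lambda_{eq})$. Since $\gamma_{(n)}=\sum_{m=n}^{N}\delta_m$, interchanging the order of summation gives
\begin{equation*}
\gamma_{coop}=\sum_{n=1}^{N_c}\gamma_{(n)}=\sum_{m=1}^{N}\min(m,N_c)\,\delta_m=\sum_{m=1}^{N_c}m\,\delta_m+\sum_{j=1}^{N-N_c}N_c\,\delta_{N_c+j}.
\end{equation*}
By the scaling property of the exponential law, each $m\delta_m\sim\mathcal{E}(\lambda_{eq})$ and each $N_c\delta_{N_c+j}\sim\mathcal{E}\big((1+\tfrac{j}{N_c})\lambda_{eq}\big)$, and all $N$ summands are independent. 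Hence $\gamma_{coop}$ is the sum of an $\mathrm{Erlang}(N_c,\lambda_{eq})$ variable and $N-N_c$ independent exponentials, so
\begin{equation*}
\mathcal{L}_{\gamma_{SCHCN}}(s)=\frac{c}{(s+\lambda_0)\,(s+\lambda_{eq})^{N_c}\prod_{j=1}^{N-N_c}\big(s+(1+\tfrac{j}{N_c})\lambda_{eq}\big)},\qquad c=\frac{N!}{N_c!\,N_c^{N-N_c}}\lambda_0\lambda_{eq}^{N},
\end{equation*}
which already produces the constant $c$ in the statement and explains why $\lambda_{eq}$ has multiplicity exactly $N_c$ while the remaining poles are simple.

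For step (ii), assume first $\lambda_0\ne\lambda_{eq}$ (and, as is generic, that $\lambda_0$ differs from every $(1+\tfrac{j}{N_c})\lambda_{eq}$). I would expand in two stages. First, decompose the simple-pole block,
\begin{equation*}
\frac{1}{(s+\lambda_0)\prod_{j=1}^{N-N_c}\big(s+(1+\tfrac{j}{N_c})\lambda_{eq}\big)}=\frac{\theta_0}{s+\lambda_0}+\sum_{k=1}^{N-N_c}\frac{\theta_k}{s+(1+\tfrac{k}{N_c})\lambda_{eq}},
\end{equation*}
with $\theta_0,\theta_k$ the usual residues --- exactly the constants quoted in the footnote. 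Then multiply back by $\tfrac{c}{(s+\lambda_{eq})^{N_c}}$ and apply, to each resulting term $\tfrac{1}{(s+\lambda_{eq})^{N_c}(s+b)}$, the elementary identity
\begin{equation*}
\frac{1}{(s+a)^{n}(s+b)}=\sum_{i=1}^{n}\frac{(-1)^{n-i}}{(b-a)^{n-i+1}}\,\frac{1}{(s+a)^{i}}+\frac{(-1)^{n}}{(b-a)^{n}}\,\frac{1}{s+b},
\end{equation*}
with $a=\lambda_{eq}$, $n=N_c$ and $b\in\{\lambda_0\}\cup\{(1+\tfrac{j}{N_c})\lambda_{eq}\}_{j}$. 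Collecting the coefficients of $(s+\lambda_{eq})^{-i}$, of $(s+\lambda_0)^{-1}$, and of $(s+(1+\tfrac{j}{N_c})\lambda_{eq})^{-1}$ yields precisely $\alpha_i$, $\beta_0$, $\beta_j$ (the factorials printed in $\alpha_i$ cancel against $(N_c-i)!$). Inverting term by term with $\mathcal{L}^{-1}\!\big[(s+\lambda_{eq})^{-i}\big](\gamma)=\tfrac{\gamma^{i-1}e^{-\lambda_{eq}\gamma}}{(i-1)!}$ and integrating from $0$ to $\gamma$ --- using $\int_0^{\gamma}\tfrac{\lambda_{eq}^i t^{i-1}e^{-\lambda_{eq}t}}{(i-1)!}\,\mathrm{d}t=1-e^{-\lambda_{eq}\gamma}\sum_{m=0}^{i-1}\tfrac{(\lambda_{eq}\gamma)^m}{m!}$ for the repeated pole and the obvious primitive for the simple poles --- gives Eq. (\ref{eq:CDF_SNR_GRS_case1}). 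The case $\lambda_0=\lambda_{eq}$ is handled identically after dropping the $\tfrac{\theta_0}{s+\lambda_0}$ term and raising the repeated-pole order to $N_c+1$; the same identity with $n=N_c+1$ then produces $\alpha_i^*$, $\beta_j^*$ and Eq. (\ref{eq:CDF_SNR_GRS_case2}).

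The main obstacle is bookkeeping rather than ideas: keeping the signs, powers, and index ranges straight through the two-stage expansion, and verifying the $N_c=N$ reductions quoted in the footnotes (where the simple-pole block is empty, $\theta_0=1$, and the transform collapses to $\tfrac{c}{(s+\lambda_0)(s+\lambda_{eq})^{N}}$). The one genuinely non-mechanical point is the combinatorial identity that the coefficient of $\delta_m$ in $\gamma_{coop}$ equals $\min(m,N_c)$; this is what forces the rate $\lambda_{eq}$ to appear with multiplicity $N_c$ and all the other rates to be simple, and once it is in hand the rest is a determined calculation.
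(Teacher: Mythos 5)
Your proposal is correct and follows essentially the same route as the paper: factor $\mathcal{L}_{\gamma_{SCHCN}}(s)=\mathcal{L}_{\gamma_{0}}(s)\mathcal{L}_{\gamma_{coop}}(s)$, expand in partial fractions (your two-stage expansion with the elementary repeated-pole identity reproduces exactly the paper's $\theta_0,\theta_k,\alpha_i,\beta_0,\beta_j$ obtained via the derivative formula), invert term by term, and integrate. The only difference is that you derive $\mathcal{L}_{\gamma_{coop}}(s)$ from first principles via the R\'enyi spacing representation and the $\min(m,N_c)$ coefficient identity, whereas the paper simply cites the corresponding generalized-selection-combining MGF from Simon and Alouini; this makes your argument self-contained but does not change the proof.
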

\begin{proof}[Proof]
Please see Appendix \ref{Proof:CDF_SNR}.
 \end{proof}

According to the SNR threshold-based FER approximation model in Section \ref{sec:FER_approx_model}, the average FER of the SCHCN scheme  can be expressed as an outage probability with an appropriate SNR threshold.  
Thus,   the average FER of
the SCHCN scheme, denoted by $\bar{P}_f$, is given as
\begin{equation} \begin{split} \label{eq:P_f_HF_GRS}
\bar{P}_{f} & \approx   F_{\gamma_{SCHCN}}(\gamma_{t,d},\lambda_{eq}),
\end{split} \end{equation}
where $d$ denotes the diversity order of  the cooperative system using the SCHCN scheme and the calculation methods of parameters   $\gamma_{t,d}$ are presented in Section \ref{sec:FER_approx_model}.
The accuracy of  the derived close-form  FER expression, i.e., Eq.~(\ref{eq:P_f_HF_GRS}), is verified through simulation results in the next section.%

\subsection{Asymptotic FER }\label{subsec:FER_HF_GRS_at_High_SNR}
In this subsection, we derive the asymptotic  FER for the SCHCN scheme at high SNR.
\begin{theorem} \label{theorem:math_PDF_simplified}
 The asymptotic CDF of $\gamma_{SCHCN}$ at high SNR   can be written in closed-form as
\begin{equation} \begin{split} \label{eq:CDF_high_SNR}
F_{\gamma_{SCHCN}}(\gamma,\lambda_{eq}) \approx \frac{\lambda_{0}{\lambda_{eq}}^N\gamma^{N+1}}{(N+1)N_c!{N_c}^{N-N_c}}  .
\end{split}  \end{equation}
\end{theorem}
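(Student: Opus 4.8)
The plan is to obtain the asymptotic CDF of $\gamma_{SCHCN}$ by taking the high-SNR limit of the exact closed-form expressions in Property~\ref{theorem:math_PDF}, but it is cleaner to work directly from the representation $\gamma_{SCHCN}=\gamma_0+\gamma_{coop}$ with $\gamma_{coop}=\sum_{n=1}^{N_c}\gamma_{(n)}$, together with the ``leading-term'' behaviour of each CDF at high SNR as in Eq.~(\ref{eq:CDFLimit}). First I would note that, by Property~\ref{theorem:CDF_HF_SNR}, each indirect-link SNR satisfies $F_{\gamma_i}(\gamma)\approx 1-e^{-\lambda_{eq}\gamma}$, so at high SNR $F_{\gamma_i}(\gamma)\to\lambda_{eq}\gamma$ and $f_{\gamma_i}(\gamma)\to\lambda_{eq}$; likewise $F_{\gamma_0}(\gamma)\to\lambda_0\gamma$ and $f_{\gamma_0}(\gamma)\to\lambda_0$.

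The key steps, in order, are: (i) Compute the high-SNR density of the order statistic $\gamma_{(n)}$ out of $N$ i.i.d.\ $\mathcal{E}(\lambda_{eq})$ variables. Since only small values of the SNRs matter for the CDF near the origin (which is what governs the high-SNR regime after scaling), I would use $f_{\gamma_{(n)}}(x)=\frac{N!}{(n-1)!(N-n)!}\,[1-F(x)]^{n-1}F(x)^{N-n}f(x)$ and replace $F(x)\approx\lambda_{eq}x$, $1-F(x)\approx 1$, $f(x)\approx\lambda_{eq}$, giving $f_{\gamma_{(n)}}(x)\approx \frac{N!}{(n-1)!(N-n)!}(\lambda_{eq}x)^{N-n}\lambda_{eq}$. (ii) Form $\gamma_{coop}=\sum_{n=1}^{N_c}\gamma_{(n)}$. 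The cleanest route is a Laplace-transform / direct-convolution argument: the leading behaviour of the density of $\gamma_{coop}$ near $0$ is a monomial of degree $\big(\sum_{n=1}^{N_c}(N-n)\big) + (N_c-1) = N N_c - \binom{N_c+1}{2} + N_c - 1$. I would instead argue more slickly: the event $\{\gamma_{coop}\le\gamma\}$ forces each of the $N_c$ largest order statistics below $\gamma$; asymptotically the probability that exactly $N_c$ of the $N$ variables lie in $[0,\gamma]$-type regions and their sum is $\le\gamma$ is, to leading order, $\binom{N}{N_c}$ times the probability that $N_c$ i.i.d.\ $\mathcal{E}(\lambda_{eq})$ variables all fall in a region of total ``mass'' $\lambda_{eq}\gamma$ with sum $\le\gamma$, and the latter integral $\int_{\substack{x_i\ge0\\ \sum x_i\le\gamma}} \lambda_{eq}^{N_c}\,dx = \lambda_{eq}^{N_c}\gamma^{N_c}/N_c!$ — but one must also account for the remaining $N-N_c$ variables being free (contributing a constant $1$ to leading order) and for the fact that it is the \emph{largest} $N_c$, not an arbitrary subset: conditioning the other $N-N_c$ to be the \emph{smallest} is automatically satisfied to leading order since they too are $O(\gamma)$ small. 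This yields $F_{\gamma_{coop}}(\gamma)\approx \binom{N}{N_c}\frac{(\lambda_{eq}\gamma)^{N_c}}{N_c!}\cdot(\text{combinatorial factor for ordering the remaining }N-N_c)$, and pinning down that combinatorial factor to be $\frac{N!}{N_c!\,N_c^{N-N_c}}\cdot\frac{1}{\text{(power correction)}}$ is where the $N_c^{N-N_c}$ in the denominator comes from. Honestly, the safest derivation is just to take Eq.~(\ref{eq:CDF_SNR_GRS_case1})/(\ref{eq:CDF_SNR_GRS_case2}), expand each $1-e^{-a\gamma}\sum_{m=0}^{i-1}(a\gamma)^m/m!$ as $\frac{(a\gamma)^i}{i!}+O(\gamma^{i+1})$, and identify the lowest-order term; by the structure of the diversity-order argument the lowest order is $\gamma^{N+1}$, and its coefficient collapses (after the $\alpha_i,\beta_j,\theta_k,c$ cancellations) to $\frac{\lambda_0\lambda_{eq}^N}{(N+1)N_c!\,N_c^{N-N_c}}$. (iii) Finally, convolve with $\gamma_0$: $F_{\gamma_{SCHCN}}(\gamma)=\int_0^\gamma F_{\gamma_{coop}}(\gamma-x)f_{\gamma_0}(x)\,dx\approx \lambda_0\int_0^\gamma c_{coop}(\gamma-x)^{N}\,dx = \lambda_0 c_{coop}\frac{\gamma^{N+1}}{N+1}$, where $c_{coop}=\frac{\lambda_{eq}^N}{N!}\cdot\frac{N!}{N_c!\,N_c^{N-N_c}}\cdot\frac{1}{?}$ is the leading coefficient of $F_{\gamma_{coop}}$; matching powers gives exactly Eq.~(\ref{eq:CDF_high_SNR}) and exhibits diversity order $d=N+1$.

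The main obstacle I expect is step (ii): getting the exact constant $\frac{1}{N_c!\,N_c^{N-N_c}}$ for the leading coefficient of $F_{\gamma_{coop}}$. The factor $N_c^{N-N_c}$ is not the ``obvious'' combinatorial constant and arises from the interaction between (a) choosing which $N_c$ of the $N$ relays are selected, (b) the nested integration region $x_{(1)}\ge\cdots\ge x_{(N_c)}\ge x_{(N_c+1)}\ge\cdots$ with the constraint $\sum_{n\le N_c}x_{(n)}\le\gamma$, and (c) the $N-N_c$ unselected terms each being bounded by the \emph{smallest selected} SNR rather than by $\gamma$. I would handle this by carrying out the iterated integral of the joint order-statistic density $N!\,f(x_1)\cdots f(x_N)\mathbf{1}_{x_1\ge\cdots\ge x_N}$ over $\{\sum_{n=1}^{N_c}x_n\le\gamma\}$ with all $f$ replaced by the constant $\lambda_{eq}$: the inner integral over $x_{N_c+1},\dots,x_N$ with $0\le x_N\le\cdots\le x_{N_c+1}\le x_{N_c}$ gives $x_{N_c}^{N-N_c}/(N-N_c)!$, and then integrating $\lambda_{eq}^N x_{N_c}^{N-N_c}/(N-N_c)!$ over the simplex $x_{N_c}\le\cdots\le x_1$, $\sum x_n\le\gamma$ is a standard Dirichlet-type integral whose value is $\lambda_{eq}^N\gamma^{N}/\big((N-N_c)!\cdot(\text{product of the }x_{N_c}\text{-weighted simplex volume})\big)$ — and $N!$ divided by this product is what produces $N_c!\,N_c^{N-N_c}$ in the denominator. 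Rather than reproduce all of this here, I would simply cite Property~\ref{theorem:math_PDF}, substitute the Taylor expansions of the exponentials, and collect the $\gamma^{N+1}$ coefficient, noting that all lower-order terms vanish because the system has diversity order $N+1$ (equivalently, the exact CDF is known to be $O(\gamma^{N+1})$ at the origin from the closed forms). This reduces the proof to a verification that the messy sum of $\alpha_i/\lambda_{eq}^i\cdot(\lambda_{eq}\gamma)^i/i!$, $\beta_j$-terms, and the $\beta_0$-term telescopes to the stated single monomial.
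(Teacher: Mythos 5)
Your overall strategy---extract the leading small-$\gamma$ (equivalently small-$\lambda_0,\lambda_{eq}$) term of the exact distribution from Property~\ref{theorem:math_PDF}---is sound and would arrive at Eq.~(\ref{eq:CDF_high_SNR}), but it is not the route the paper takes, and the route you finally settle on leaves the decisive computation undone. The paper works entirely in the Laplace domain: it reuses the exact transform $\mathcal{L}_{\gamma_{SCHCN}}(s)=\frac{N!}{N_c!N_c^{N-N_c}}\lambda_0\lambda_{eq}^N\big/\big[(s+\lambda_{eq})^{N_c}(s+\lambda_0)\prod_{n=1}^{N-N_c}(s+(1+\tfrac{n}{N_c})\lambda_{eq})\big]$ already derived for Property~\ref{theorem:math_PDF}, forms $\hat{\mathcal{L}}_{\gamma_{SCHCN}}(s)=\mathcal{L}_{\gamma_{SCHCN}}(s)/s$, drops every $\lambda$ against $s$ (legitimate at high SNR since $\lambda_0,\lambda_{eq}\to 0$), and inverts $s^{-(N+2)}$ to obtain the single monomial in one line. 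This sidesteps entirely the ``messy sum telescopes'' verification that your preferred fallback (Taylor-expanding Eq.~(\ref{eq:CDF_SNR_GRS_case1})/(\ref{eq:CDF_SNR_GRS_case2}) and collecting the $\gamma^{N+1}$ coefficient through the $\alpha_i,\beta_j,\theta_k$ cancellations) merely asserts. You come within a hair of this shortcut when you mention a ``Laplace-transform / direct-convolution argument,'' but you never take the step of approximating the already-known transform at large $s$; had you done so, steps (ii) and (iii) collapse and the constant $\frac{1}{(N+1)N_c!N_c^{N-N_c}}$ falls out with no combinatorics.

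Two concrete problems in your write-up. First, your degree count for the density of $\gamma_{coop}$ near the origin, $\big(\sum_{n=1}^{N_c}(N-n)\big)+(N_c-1)$, is wrong for $N_c\geq 2$: it implicitly convolves the \emph{marginal} densities of the order statistics as if they were independent, which they are not. (Check $N=N_c=2$: $\gamma_{(1)}+\gamma_{(2)}=\gamma_1+\gamma_2$ has density of degree $1$ near zero, not $2$.) The correct statement is that $F_{\gamma_{coop}}(\gamma)=\Theta(\gamma^{N})$ regardless of $N_c$, because $\gamma_{coop}\leq\gamma$ forces $\gamma_{(1)}\leq\gamma$ and hence all $N$ variables to be small; this is visible immediately from $\mathcal{L}_{\gamma_{coop}}(s)\sim C s^{-N}$. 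Your later iterated-integral sketch over the ordered region does repair this, but you should discard the independent-convolution count. Second, your final proof as stated reduces to ``the coefficient collapses to the stated constant,'' which is precisely the claim to be proved; without either carrying out that collapse or invoking the large-$s$ behaviour of the transform (an initial-value-theorem argument), the proof is incomplete.
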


\begin{proof}[Proof]
Please see Appendix \ref{Proof:CDF_SNR_at_high_SNR}.
\end{proof}
With the help of Property \ref{theorem:math_PDF_simplified}, we can obtain the asymptotic FER expression of the SCHCN scheme as
\begin{equation} \begin{split} \label{eq:P_f_HF_high_SNR}
\bar{P}_{f} & \approx   F_{\gamma_{SCHCN}}(\gamma_{t,d},\lambda_{eq}) \approx  \frac{\lambda_{0}{\lambda_{eq}}^N\gamma_{t,d}^{N+1}}{(N+1)N_c!{N_c}^{N-N_c}}.\\
\end{split}  \end{equation}

From Eq. (\ref{eq:P_f_HF_high_SNR}) it shows that  full diversity order of $N+1$ can be achieved in the SCHCN scheme, and thus, $\gamma_{t,d}=\gamma_{t,N+1}$.  
We should emphasize that the asymptotic FER is more simple   and intuitive compared with the  FER expression in Eq.~(\ref{eq:P_f_HF_GRS}).

\section{Simulation results} \label{sec:simulation}

In this section,  the analytical and the simulated results are provided for the SCHCN scheme.
Without specific mention, all simulations are performed by a BPSK modulation and a frame size
of $100$ symbols over block Rayleigh fading channels\cite{Huo2012P4998,Huo2013P5924,Huo2011P1}.
We consider the multiple relay nodes scenario, e.g., $N=1,2,3$.
Specifically, $N_c=0$ corresponds to the direct transmit case, and thus, no relay node is selected. $N_c=1$ denotes the best RS, and  all the relay nodes are participated for the transmission when  $N_c=N$.
Without specific mention,  we assume that a fixed total energy per symbol constraint in the network and equal power division among cooperating nodes, i.e., ($\mathcal{E}_s= \mathcal{E}_{r,i}=\frac{\mathcal{E}}{N_c+1}$), and $\lambda_{eq}=\lambda_{sr}+\lambda_{rd}$.
We also take into account the relay's location as: 1). the \emph{symmetric} case, where relays are placed halfway between the source and destination and 2). the \emph{asymmetric} case, where relays are close to the source or destination.
In Table~\ref{table:SNR threshold}, we show the SNR thresholds derived in Section \ref{sec:FER_approx_model} for the systems of different diversity order.
The simulation scenarios in this section are shown in Table~\ref{table:scenarios}.


In Fig. \ref{fig:case.0}, we compare our proposed FER model  in Section \ref{sec:FER_approx_model} with the FER model in  \cite{Chatzigeorgiou2008P577} for case 0.
For case 0, the SNR thresholds are found to be $4.61$ dB based on the FER model in \cite{Chatzigeorgiou2008P577}, i.e., Eq. (\ref{eq.SNR.threshod.old}),   and are found to be $5.10$ dB, $5.36$ dB and $5.89$ dB for $N=1, 2, 4$, respectively,   based on our proposed FER model, i.e.,  Eq. (\ref{eq:SNR_threshold_continue}).
Fig. \ref{fig:case.0} shows that  our proposed model matches well with  the simulated FER as the SNR increases while the gap between the FER model in  \cite{Chatzigeorgiou2008P577} and the simulated FER cannot be ignored even at high SNR.
We also note that,  as  the diversity order increases, our proposed model is still accurate; however the FER results based on  the model in \cite{Chatzigeorgiou2008P577} become less accurate.

In Fig. \ref{fig:SNR_threshold}, we present the optimal SNR threshold, i.e., Eq. (\ref{eq:opt.SNR.threshold}), obtained  by using the numerical method, when the FER  well matches the outage probability. We also include the proposed SNR threshold, i.e., Eq. (\ref{eq:SNR_threshold_continue}),  and the SNR threshold of \cite{Chatzigeorgiou2008P577},  i.e., Eq. (\ref{eq.SNR.threshod.old}). It can be observed that our proposed SNR threshold converges to the optimal SNR threshold at medium and high SNR regimes quickly, which  validates  our analysis in Section \ref{sec:FER_approx_model}.

In Fig. \ref{fig:Theory_Simulation_HF_3Relay_ClosedForm} and Fig. \ref{fig:Theory_Simulation_HF_3Relay_Simplified_individual_power}, we show the closed-form FER, asymptotic FER and simulated FER results of the SCHCN scheme for symmetric case, i.e., case 1.
From Fig. \ref{fig:Theory_Simulation_HF_3Relay_ClosedForm}, we can see that the closed-form curves given by Eq. (\ref{eq:P_f_HF_GRS}) match well with the simulated results  and  Fig. \ref{fig:Theory_Simulation_HF_3Relay_Simplified_individual_power} shows that the asymptotic results  given by Eq. (\ref{eq:P_f_HF_high_SNR}) converge to the simulated results at high SNRs.
These validate our  derived analytical FERs  in Section \ref{sec:FER_approx_model}.
In addition,  Fig. \ref{fig:Theory_Simulation_HDAF_3Relay_L_ClosedForm}   indicates that the derived FER of the SCHCN scheme is also accurate for the  case 2~(asymmetric case). Case 3 is also verified through simulation, but omitted here for brevity. Hence, our FER analysis is valid for both symmetric case and asymmetric case.

Fig. \ref{fig:Theory_Simulation_HF_3Relay_ClosedForm}   shows that  the best RS provides superior performance than the multiple RS under a total transmit power constraint, i.e., $\mathcal{E}_s= \mathcal{E}_{r,i}=\frac{\mathcal{E}}{N_c+1}$,   whereas Fig. \ref{fig:Theory_Simulation_HF_3Relay_Simplified_individual_power} shows that   the multiple RS outperforms the best RS under an individual transmit power constraint i.e., $\mathcal{E}_s= \mathcal{E}_{r,i}=\frac{\mathcal{E}}{N+1}$.  Moreover,
from Fig. \ref{fig:Theory_Simulation_HF_3Relay_Simplified_individual_power}  we note that the cooperative system with $N_c=2$ has very slight quality deterioration compared to the cooperative system with $N_c=3$. However,   the cooperative system with $N_c=2$ requires lower implementation complexity and   is more robust towards channel estimation errors. Hence, the SCHCN scheme has more merits than the conventional cooperative scheme in the practical systems for both with and without total transmit power constraint cases.


In Fig. \ref{fig:Theory_Simulation_HF_123Relay1_ClosedForm},  we compare   the SCHCN scheme with the non-cooperative system, i.e., $N_c=0$.
It is obviously shown that the SCHCN scheme can provide superior performance than the non-cooperative system, especially at high SNRs.
This is easy to understand as the cooperative system achieves a full diversity order of $N+1$ as long as $N_c>0$, whereas the non-cooperative system only has the diversity of $1$.
The simulation results of the SCHCN scheme with various frame lengths are shown in Fig.~\ref{fig:Theory_Simulation_HF_3Relay1_diff_L_ClosedForm}.
From Fig.~\ref{fig:Theory_Simulation_HF_3Relay1_diff_L_ClosedForm}, it shows that as the frame length decreases, the FER decreases. This finding is consistent with \cite{Lettieri1998P564}\footnote{We would refer the reader to \cite{Lettieri1998P564} for more details on the choice of a proper frame length.}.

\section{Conclusions} \label{sec:conclusion}
In this paper, we have analyzed the average FER of the SCHCN scheme in   cooperative wireless networks. Specifically, we considered a two-hop network with one source node, one destination node and multiple relay nodes. 
The closed-form  average FER expression as well as the   asymptotic FER expression at high SNRs have been derived for the SCHCN scheme.
Theoretical analysis  closely matches the simulated results.  
Simulations also indicate that the best RS can achieve better performance than  the multiple RS under a total transmit power constraint, whereas the multiple RS outperforms the best RS  under an individual transmit power constraint.



\appendices

\section{Proof of Eq. (\ref{eq:ZECriterion})} \label{Proof:SNR_threshold}
We notice that the
relative error at high SNR cannot be ignored because the integration in Eq.
(\ref{eq:RelativeCriterion}) is from $0$ to $\infty$. Hence,
we should have $\lim_{\bar{\gamma}
\rightarrow \infty}\left|
\frac{ \bar{P}_{f}(\bar{\gamma}) -
F_{\gamma}(\gamma_t,\bar{\gamma})}{\bar{P}_{f}(\bar{\gamma})}\right|\rightarrow 0$, which leads to Eq. (\ref{eq:ZECriterion}).

Otherwise,  suppose that existing a sufficiently big value $T$ ($0<T<\infty$) and a small enough value $\delta$ ($0< \delta < \infty$), the absolute relative error can be greater than $\delta$, i.e. $\left|
\frac{ \bar{P}_{f}(\bar{\gamma}) -
F_{\gamma}(\gamma_t,\bar{\gamma})}{\bar{P}_{f}(\bar{\gamma})}\right| >\delta, \text{when } \bar{\gamma}>T$.  Note that the sum of absolute relative error   cannot be minimized in this situation as it will approach infinity: $\int_{0}^{\infty} \left|
\frac{ \bar{P}_{f}(\bar{\gamma}) -
F_{\gamma}(\gamma_t,\bar{\gamma})}{\bar{P}_{f}(\bar{\gamma})}\right|
\text{d} \bar{\gamma} > \int_{T}^{\infty} \delta \text{d} \bar{\gamma} = \infty$.  Thus, Eq. (\ref{eq:ZECriterion}) is proved.




\section{CDF of $\gamma_{i}$ in the SCHCN scheme: The Proof of Eq. (\ref{eq:CDF_HF_r_i}) and Eq. (\ref{eq:CDF_HF_r_i_Approximation})} \label{Proof:CDF_HF_SNR}

Note that the CDF of $\gamma_i$  is given as
\begin{equation} \begin{split}  \label{eq:end-to-end-i}
F_{\gamma_i}(\gamma ) &= Pr(\gamma_{i}<\gamma)\\
&= Pr(i \in \mathcal{G}_{AF})Pr(\gamma_{i}<\gamma|i \in \mathcal{G}_{AF})  \\
&\quad+Pr(i \in \mathcal{G}_{DF})Pr(\gamma_{i}<\gamma|i \in \mathcal{G}_{DF}).
\end{split} \end{equation}

In the HF scheme, the relays which decode correctly are included in  the  $\mathcal{G}_{DF}$  and the rest are  included in the $\mathcal{G}_{AF}$.  
According to our proposed FER approximation model, a frame error only
occurs when SNR is below the SNR threshold $\gamma_t$. If $\gamma_{1i} \geq \gamma_{t,1}$ then $i \in \mathcal{G}_{DF}$, and if $\gamma_{1i} < \gamma_{t,1}$ then $i \in \mathcal{G}_{AF}$.
Hence, we have $Pr(i \in \mathcal{G}_{AF})=1-e^{-\lambda_{sr} \gamma_{t,1}}$ and  $Pr(i \in \mathcal{G}_{DF})=e^{-\lambda_{sr} \gamma_{t,1}}$, respectively.

If $i \in \mathcal{G}_{DF}$, $\gamma_i=\gamma_{2i}$, then the conditional CDF of
$\gamma_i$ given $i \in \mathcal{G}_{DF}$ can be expressed as
\begin{equation}  \label{eq:CDF_gamma_i_in_DF}
Pr(\gamma_{i}<\gamma|i \in \mathcal{G}_{DF}) = Pr(\gamma_{2i}<\gamma) =
1-e^{-\lambda_{rd} \gamma}.
\end{equation}

If $i \in \mathcal{G}_{AF} $, using the approximation $\frac{x y}{x+y+1} \approx min\{x,y\}$\cite{Anghel2004P1416}, then the corresponding conditional CDF of $\gamma_i$ given $i \in \mathcal{G}_{AF} $ can be expressed as \cite{David2003P}
\begin{equation} \begin{split} \label{eq:CDF_gamma_i_in_AF}
&\quad Pr(\gamma_{i}<\gamma|i \in \mathcal{G}_{AF})\\
&\approx Pr\left(min\{\gamma_{1i}|_{i \in \mathcal{G}_{AF}},\gamma_{2i}\} < \gamma\right) \\
&=1-(1-Pr(\gamma_{1i}<\gamma|i \in \mathcal{G}_{AF}))(1-Pr(\gamma_{2i}<\gamma) ),
\end{split} \end{equation}
where the conditional CDF of $\gamma_{1i}|i \in \mathcal{G}_{AF}$ is given by
\begin{equation}  \label{eq:CDF_gamma_1i_in_AF}
Pr(\gamma_{1i}<\gamma|i \in \mathcal{G}_{AF})=
\begin{cases}
\frac{ 1-e^{-\lambda_{sr} \gamma}}
{1-e^{-\lambda_{sr} \gamma_{t,1}}}, & \text{if } \gamma < \gamma_{t,1} \\
1.& \text{if } \gamma \geq \gamma_{t,1}
\end{cases}
\end{equation}

Combining Eq. (\ref{eq:CDF_gamma_i_in_AF}) and Eq. (\ref{eq:CDF_gamma_1i_in_AF}), the corresponding conditional CDF of $i \in \mathcal{G}_{AF} $ can be calculated
as
\begin{equation} \begin{split} \label{eq:CDF_gamma_i_in_AF2}
Pr(\gamma_{i}<\gamma|i \in \mathcal{G}_{AF})
& \approx
\begin{cases}
1-\frac{ e^{-\lambda_{sr} \gamma}-e^{-\lambda_{sr} \gamma_{t,1}}}
{1-e^{-\lambda_{sr} \gamma_{t,1}}} e^{-\lambda_{rd} \gamma}, & \text{if } \gamma < \gamma_{t,1}, \\
1,& \text{if } \gamma \geq \gamma_{t,1}.
\end{cases}
\end{split} \end{equation}

Thus, after doing some substitutions in Eq. (\ref{eq:end-to-end-i}), it yields Eq. (\ref{eq:CDF_HF_r_i}).
When $\gamma_{t,1} \leq \gamma \leq \gamma_{t,d}$, we use an exponential function to approximate Eq. (\ref{eq:CDF_HF_r_i}).
Let $1-e^{-\lambda_{sr} \gamma_{t,1}}e^{-\lambda_{rd}\gamma} = 1 - e^{-\lambda_{eq}\gamma}$, that is, $\lambda_{eq} =\lambda_{sr} \gamma_{t,1}/\gamma +\lambda_{rd}$. Hence, we have
$\lambda_{sr} \gamma_{t,1}/\gamma_{t,d} +\lambda_{rd} \leq \lambda_{eq} \leq \lambda_{sr} +\lambda_{rd}$. We note that this inequality  still holds when considering  the whole range from $0$ to $\gamma_{t,d}$.




\section{Proof of Property \ref{theorem:math_PDF}} \label{Proof:CDF_SNR}

According to Eq. (\ref{eq:SNR_GRS_MRC}), $\gamma_{SCHCN} =\gamma_{0}+\gamma_{coop}$, in the following, we first derive the Laplace transforms of the PDF of $\gamma_{coop}$ and $\gamma_{0}$, and then the PDF of $\gamma_{SCHCN}$ can be directly obtained by using the inverse Laplace transform.

As $\{\gamma_{i}\}_{i=1}^{N}$   are i.i.d. exponential RVs , with the help of \cite[Eq. (9.321)]{Simon2005P}, the Laplace transform of the PDF of $\gamma_{coop}$ can be expressed as
\begin{equation} \begin{split}
\mathcal{L}_{\gamma_{coop}}(s)
&=\frac{1}{(1+\frac{s}{\lambda_{eq}})^{N_c-1} \prod_{n=N_c}^N(1+\frac{s N_c}{\lambda_{eq} n})}\\
&=\frac{\frac{N!}{N_c!N_c^{N-N_c}}\lambda_{eq}^N}{(s+\lambda_{eq})^{N_c} \prod_{n=1}^{N-N_c}[s+(1+\frac{n}{N_c})\lambda_{eq}]},
\end{split} \end{equation}
and with the help of \cite[Eq. (17.13.7)]{Gradshteyn2007P}, the Laplace transform of the PDF of $\gamma_{0}$ can be expressed as
\begin{equation} \begin{split}
\mathcal{L}_{\gamma_{0}}(s)
&=\frac{\lambda_{0}}{s+\lambda_{0}}.
\end{split} \end{equation}
 .

As $\gamma_{coop}$ and $\gamma_{0}$ are mutually independent, the Laplace transforms of the PDF of $\gamma_{SCHCN}$ is given as
\begin{equation} \begin{split} \label{eq:prod_Laplace_gamma_GRS_case1}
 \mathcal{L}_{\gamma_{SCHCN}}(s)
&= \mathcal{L}_{\gamma_{coop}}(s)\cdot \mathcal{L}_{\gamma_{0}}(s) \\
&=\frac{\frac{N!}{N_c!N_c^{N-N_c}}\lambda_{0}\lambda_{eq}^N}{ (s+\lambda_{eq})^{N_c} (s+\lambda_{0}) \prod_{n=1}^{N-N_c}[s+(1+\frac{n}{N_c})\lambda_{eq}]}.
\end{split} \end{equation}

Note that Eq. (\ref{eq:prod_Laplace_gamma_GRS_case1}) can be rewritten for the two cases $\lambda_{0}\neq \lambda_{eq}$ and $\lambda_{0}=\lambda_{eq}$ as follows.

\subsubsection{Case of $\lambda_{0}\neq \lambda_{eq}$}
Applying the partial fraction expansion\cite{Latni1998P}, Eq. (\ref{eq:prod_Laplace_gamma_GRS_case1}) can be rewritten as
\begin{equation} \begin{split}\label{eq:sum_Laplace_gamma_GRS_case1}
\mathcal{L}_{\gamma_{SCHCN}}(s)&=\sum_{i=1}^{N_c} \frac{\alpha_i}{(s+\lambda_{eq})^{i}}+\frac{\beta_0}{s+\lambda_{0}}+\sum_{j=1}^{N-N_c}\frac{\beta_j}{s+(1+\frac{j}{N_c})\lambda_{eq}},\\
\end{split} \end{equation}
where
\begin{equation} \begin{split}
c=\frac{N!}{N_c!N_c^{N-N_c}}\lambda_{0}\lambda_{eq}^N,
\end{split} \end{equation}
\begin{equation} \begin{split}
\theta_0=\frac{1}{{\prod_{n=1}^{N-N_c} [(1+\frac{n}{N_c})\lambda_{eq}-\lambda_{0}]}} ,
\end{split} \end{equation}
\begin{equation} \begin{split}
\theta_k&=\frac{1}{[-(1+\frac{k}{N_c})\lambda_{eq}+\lambda_{0}]\prod_{n=1,n \neq k}^{N-N_c} [(\frac{n-k}{N_c})\lambda_{eq}]},
\end{split} \end{equation}
\begin{equation} \begin{split}
\alpha_i
&=\frac{c}{(N_c-i)!}\frac{d^{N_c-i}}{ds^{N_c-i}}\left[\frac{1}{(s+\lambda_{0})\prod_{n=1}^{N-N_c} [s+(1+\frac{n}{N_c})\lambda_{eq}]} \right] \bigg|_{s=-\lambda_{eq}}\\
&=\frac{c}{(N_c-i)!} \left[\frac{\theta_0(-1)^{N_c-i}(N_c-i)!}{(-\lambda_{eq}+\lambda_{0})^{N_c-i+1}} +\sum_{k=1}^{N-N_c} \frac{\theta_k(-1)^{N_c-i}(N_c-i)!}{( \frac{k}{N_c} \lambda_{eq})^{N_c-i+1}} \right],
\end{split} \end{equation}
\begin{equation} \begin{split}
\beta_0&=\frac{c}{(\lambda_{eq}-\lambda_{0})^{N_c}\prod_{n=1}^{N-N_c}[(1+\frac{n}{N_c})\lambda_{eq}-\lambda_{0}]},\\
\end{split} \end{equation}
and
\begin{equation} \begin{split}
\beta_j&=\frac{c}{(-\frac{j}{N_c}\lambda_{eq})^{N_c} [-(1+\frac{j}{N_c})\lambda_{eq}+\lambda_{0}] \prod_{n=1,n\neq j}^{N-N_c}[(\frac{n-j}{N_c})\lambda_{eq}]}.\\
\end{split} \end{equation}

Applying the inverse Laplace transforms in Eq. (\ref{eq:sum_Laplace_gamma_GRS_case1}), we can obtain the PDF of $\gamma_{SCHCN}$ in closed form as \cite[Eq. (17.13.17)]{Gradshteyn2007P}
\begin{equation} \begin{split}
f_{\gamma_{SCHCN}}(\gamma)&=\mathcal{L}^{-1}\left[ \mathcal{L}_{\gamma_{SCHCN}}(s)\right]\\
&=\sum_{i=1}^{N_c} \frac{ \alpha_i }{(i-1)!}\gamma^{i-1} e^{-\lambda_{eq}\gamma}+ \beta_0 e^{-\lambda_{0}\gamma} +\sum_{j=1}^{N-N_c} \beta_j e^{-(1+\frac{j}{N_c})\lambda_{eq}\gamma}.\\
\end{split} \end{equation}

The CDF of $\gamma_{SCHCN}$ can be obtained directly by integrating the PDF of $\gamma_{SCHCN}$ in closed form as
\begin{equation} \begin{split}
F_{\gamma_{SCHCN}}(\gamma) &= \int_0^\gamma f_{\gamma_{SCHCN}}(t)\text{d}t\\
&=\sum_{i=1}^{N_c} \frac{ \alpha_i }{(i-1)!} g(i-1,\lambda_{eq},\gamma)
+ \beta_0 g(0,\lambda_{0},\gamma) +\sum_{j=1}^{N-N_c} \beta_j\cdot g\left(0,(1+\frac{j}{N_c})\lambda_{eq},\gamma\right) \\
&=\sum_{i=1}^{N_c} \frac{ \alpha_i } {\lambda_{eq} ^{i}}\left[ 1-e^{-\lambda_{eq} \gamma}\sum_{m=0}^{i-1} \frac{(\lambda_{eq} \gamma)^m}{m!}\right] +\sum_{j=1}^{N-N_c} \frac{\beta_j}{(1+\frac{j}{N_c})\lambda_{eq}}(1-e^{-(1+\frac{j}{N_c})\lambda_{eq}\gamma}) \\
&\qquad \qquad+ \frac{\beta_0}{\lambda_0}(1-e^{-\lambda_{0}\gamma}),
\end{split} \end{equation}
where   $g(n,\beta,\gamma)$ is defined as \cite[Eq. (3.351.1)]{Gradshteyn2007P}
\begin{equation} \begin{split}
g(n,\beta,\gamma)&=\int_0^x e^{-\beta t}t^n\text{d}t\\
& =\frac{n!}{\beta ^{n+1}}\left[ 1-e^{-\beta x}\sum_{m=0}^n \frac{(\beta x)^m}{m!}\right] ,\\
&\quad [\gamma>0,Re\,\beta >0,n=0,1,2,...].
\end{split} \end{equation}

\subsubsection{Case of $\lambda_{0}=\lambda_{eq}$}
Eq. (\ref{eq:prod_Laplace_gamma_GRS_case1}) can be rewritten as

\begin{equation} \begin{split} \label{eq:prod_Laplace_gamma_GRS_case2}
\mathcal{L}_{\gamma_{SCHCN}}(s)&=\mathcal{L}_{\gamma_{coop}}(s)\cdot \mathcal{L}_{\gamma_{0}}(s)\\
&=\frac{\frac{N!}{N_c!N_c^{N-N_c}}\lambda_{0}^{N+1}}{(s+\lambda_{0})^{N_c+1} \prod_{n=1}^{N-N_c}[s+(1+\frac{n}{N_c})\lambda_{0}]}.
\end{split} \end{equation}
Hence, applying the partial fraction expansion in Eq. (\ref{eq:prod_Laplace_gamma_GRS_case2}), we can further simplify it as
\begin{equation} \begin{split}\label{eq:sum_Laplace_gamma_GRS_case2}
\mathcal{L}_{\gamma_{SCHCN}}(s)&=\sum_{i=1}^{N_c+1} \frac{\alpha_i^*}{(s+\lambda_{0})^{i}} +\sum_{j=1}^{N-N_c}\frac{\beta_j^*}{s+(1+\frac{j}{N_c})\lambda_{0}},\\
\end{split} \end{equation}
where
\begin{equation} \begin{split}
c^*=\frac{N!}{N_c!N_c^{N-N_c}}\lambda_{0}^{N+1},
\end{split} \end{equation}
\begin{equation} \begin{split}
\theta_k^*&=\frac{1}{  \prod_{n=1,n \neq k}^{N-N_c} [(\frac{n-k}{N_c})\lambda_{0}]}, \\
\end{split} \end{equation}
\begin{equation} \begin{split}
\alpha_i^*
&=\frac{c^*}{(N_c+1-i)!}\frac{d^{N_c+1-i}}{ds^{N_c+1-i}}\left[\frac{1}{ \prod_{n=1}^{N-N_c} [s+(1+\frac{n}{N_c})\lambda_{0}]} \right] \bigg|_{s=-\lambda_{0}}\\
&=\frac{c^*}{(N_c+1-i)!} \left[ \sum_{k=1}^{N-N_c} \frac{\theta_k^*(-1)^{N_c+1-i}(N_c+1-i)!}{( \frac{k}{N_c} \lambda_{0})^{N_c-i+2}} \right],
\end{split} \end{equation}
and
\begin{equation} \begin{split}
\beta_j^*&=\frac{c^*}{(-\frac{j}{N_c}\lambda_{0})^{N_c+1}   \prod_{n=1,n\neq j}^{N-N_c}[(\frac{n-j}{N_c})\lambda_{0}]}.\\
\end{split} \end{equation}

Similarly, we can derive the expression of the PDF and CDF of $\gamma_{SCHCN}$ as follows:
\begin{equation} \begin{split}
f_{\gamma_{SCHCN}}(\gamma)&=\mathcal{L}^{-1}\left[ \mathcal{L}_{\gamma_{SCHCN}}(s)\right]=\sum_{i=1}^{N_c+1} \frac{ \alpha_i^* }{(i-1)!}\gamma^{i-1} e^{-\lambda_{0}\gamma}+ \sum_{j=1}^{N-N_c} \beta_j^* e^{-(1+\frac{j}{N_c})\lambda_{0}\gamma},\\
\end{split} \end{equation}
and
\begin{equation} \begin{split}
F_{\gamma_{SCHCN}}(\gamma) &= \int_0^\gamma f_{\gamma_{SCHCN}}(t)\text{d}t\\
&=\sum_{i=1}^{N_c+1} \frac{ \alpha_i^* } {\lambda_{eq} ^{i}}\left[ 1-e^{-\lambda_{0} \gamma}\sum_{m=0}^{i-1} \frac{(\lambda_{0} \gamma)^m}{m!}\right] +\sum_{j=1}^{N-N_c} \frac{\beta_j^*}{(1+\frac{j}{N_c})\lambda_{0}}(1-e^{-(1+\frac{j}{N_c})\lambda_{0}\gamma}). \\
\end{split} \end{equation}
Hence, Property \ref{theorem:math_PDF} is proved. 




\section{Proof of Property \ref{theorem:math_PDF_simplified}} \label{Proof:CDF_SNR_at_high_SNR}
The Laplace transform of the PDF of $\gamma_{SCHCN}$ is derived in Appendix \ref{Proof:CDF_SNR} as
\begin{equation} \begin{split}
\mathcal{L}_{\gamma_{SCHCN}}(s)&=\frac{\frac{N!}{N_c!N_c^{N-N_c}}\lambda_{0}\lambda_{eq}^N}{(s+\lambda_{eq})^{N_c} (s+\lambda_{0}) \prod_{n=1}^{N-N_c}[s+(1+\frac{n}{N_c})\lambda_{eq}]}.
\end{split} \end{equation}
Since the Laplace transform of the PDF and CDF of $\gamma_{SCHCN}$ are related by
\begin{equation} \begin{split}
\hat{\mathcal{L}}_{\gamma_{SCHCN}}(s)= \frac{\mathcal{L}_{\gamma_{SCHCN}}(s)}{s},
\end{split} \end{equation}
where $\hat{\mathcal{L}}_{\gamma_{SCHCN}}(s)$ is the Laplace transform of the CDF of $\gamma_{SCHCN}$ .
Applying the inverse Laplace transforms, the CDF of $\gamma_{SCHCN}$ can be expressed as
\begin{equation} \begin{split} \label{eq:CDF_SNR_GRS_Simple0}
F_{\gamma_{SCHCN}}(\gamma)&=\mathcal{L}^{-1}\left[\frac{\mathcal{L}_{\gamma_{SCHCN}}(s)}{s}\right]\\
&=\mathcal{L}^{-1}\left[\frac{\frac{N!}{N_c!N_c^{N-N_c}}\lambda_{0}\lambda_{eq}^N}{s(s+\lambda_{eq})^{N_c} (s+\lambda_{0}) \prod_{n=1}^{N-N_c}[s+(1+\frac{n}{N_c})\lambda_{eq}]}\right].
\end{split} \end{equation}

At high SNR, $\overline\gamma_{0}$ and $\overline\gamma_{eq}$ are large. According to $\lambda_0=1/\overline\gamma_{0}$ and $\lambda_{eq}=1/\overline\gamma_{eq}$, we know that $\lambda_0$ and $\lambda_{eq}$ are small.
Hence, Eq. (\ref{eq:CDF_SNR_GRS_Simple0}) can be simplified as\cite[Eq. (17.13.2)]{Gradshteyn2007P}
\begin{equation} \begin{split}
F_{\gamma_{SCHCN}}(\gamma)& \approx \mathcal{L}^{-1}\left[\frac{N!}{N_c!{N_c}^{N-N_c}}\frac{\lambda_0{\lambda_{eq}}^N}{s^{N+2}}\right]\\
&=\frac{1}{(N+1)N_c!{N_c}^{N-N_c}} \lambda_{0}{\lambda_{eq}}^N\gamma^{N+1}.
\end{split} \end{equation}

Hence, Eq. (\ref{eq:CDF_high_SNR}) is proved.



%




\bibliographystyle{ieeetran}






\newpage


\begin{table*}[!t]
\centering
\caption{SNR THRESHOLD} \label{table:SNR threshold}
\begin{tabular}{| c| c| c| c |}
\hline
  Diversity Order: $d$ & Frame Length: $L$ & Proposed SNR Threshold: $\gamma_{t,d}$, dB & SNR Threshold of \cite{Chatzigeorgiou2008P577}: $\gamma_{t}$, dB \\ \hline \hline
  1 & 100& 5.10  & \\ \cline{1-3} 
  2 & 100& 5.36  & 4.61 \\ \cline{1-3} 
  3 & 100& 5.62  & \\ \cline{1-3} 
  4 & 100& 5.89  & \\ \hline
  4 & 200& 6.45  & 5.50  \\ \hline  
  4 & 400& 6.97  & 6.24  \\ \hline 
\end{tabular}
\end{table*}


\begin{table*}[!t]
\centering
\caption{SCENARIOS} \label{table:scenarios}
\begin{tabular}{| c| c| c| c | c| c|}
\hline
        & System Type & Number of Nodes & Diversity Order: $d$ & SNRs \\ \hline \hline
case 0  & MIMO        & $N_T=1, N=N_R=1,2,4$    &$d=1,2,4$        & $\Omega =1$ \\ \hline
case 1  & SCHCN        & $N=1,2,3$     &$d=2,3,4$          & $\Omega_0=\Omega_{1i}=\Omega_{2i}=1$\\ \hline
case 2  & SCHCN        & $N=3$    &$d=4$            & $\Omega_0=1$, $\Omega_{1i}=16$, $\Omega_{2i}=1$ \\ \hline
case 3  & SCHCN        & $N=3$   &$d=4$              & $\Omega_0=1$, $\Omega_{1i}=1/16$, $\Omega_{2i}=1$\\ \hline
\end{tabular}
\end{table*}



\newpage

\begin{figure}
\centering
\includegraphics[width=0.7\textwidth]{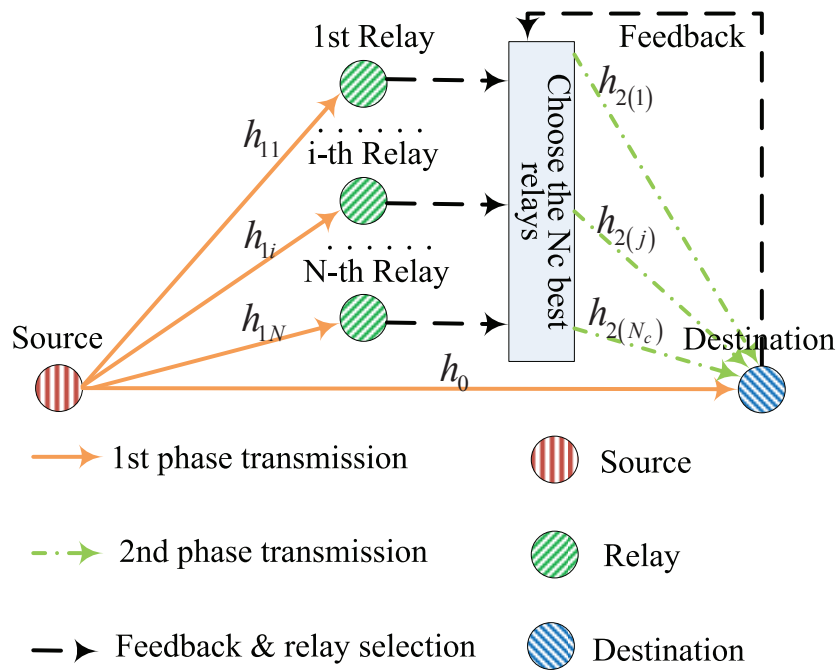}
\caption{Block diagram of the  SCHCN scheme.} \label{fig:SystemDiagram_GRS}
\end{figure}

\begin{figure}[!t]
\centering
\graphicspath{{}}
\includegraphics[width=0.7\textwidth]{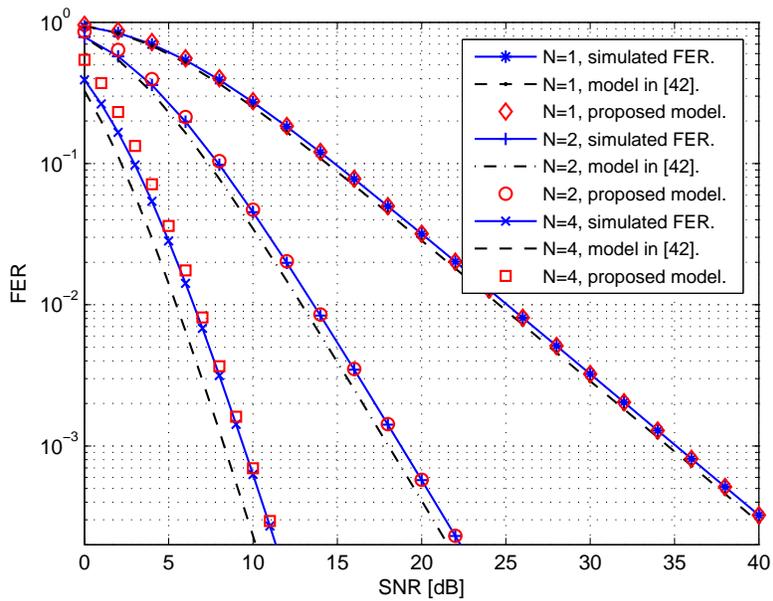}
\caption{FER comparison of the proposed model and the model of  \cite{Chatzigeorgiou2008P577} for case 0: general MIMO channels with $N_T=1$, $N=N_R=1,2,4$, $\Omega =1$ and $L=100$.}
 \label{fig:case.0}
\end{figure}

\begin{figure}[!t]
\centering
\includegraphics[width=0.7\textwidth]{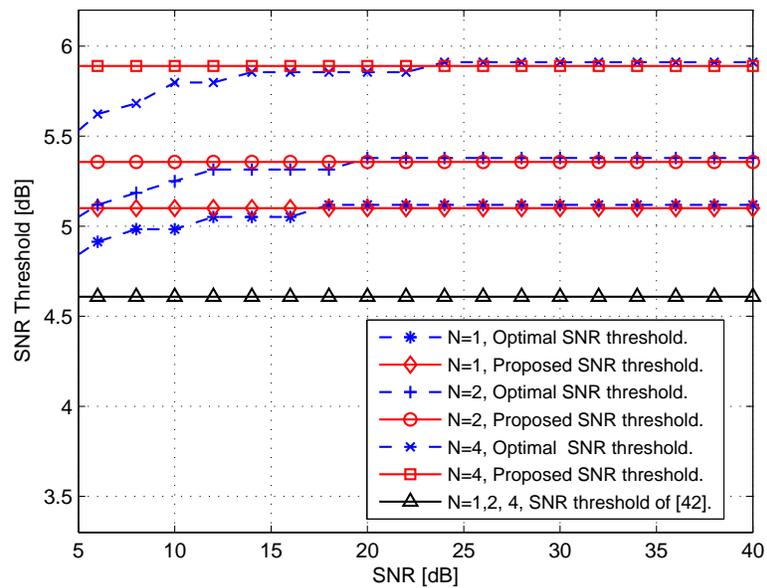}
\caption{SNR threshold comparison of the proposed model and the model of  \cite{Chatzigeorgiou2008P577} for case 0: general MIMO channels with $N_T=1$, $N=N_R=1,2,4$, $\Omega =1$ and $L=100$.}
\label{fig:SNR_threshold}
\end{figure}

\begin{figure}[!t]
\centering
\includegraphics[width=0.7\textwidth]{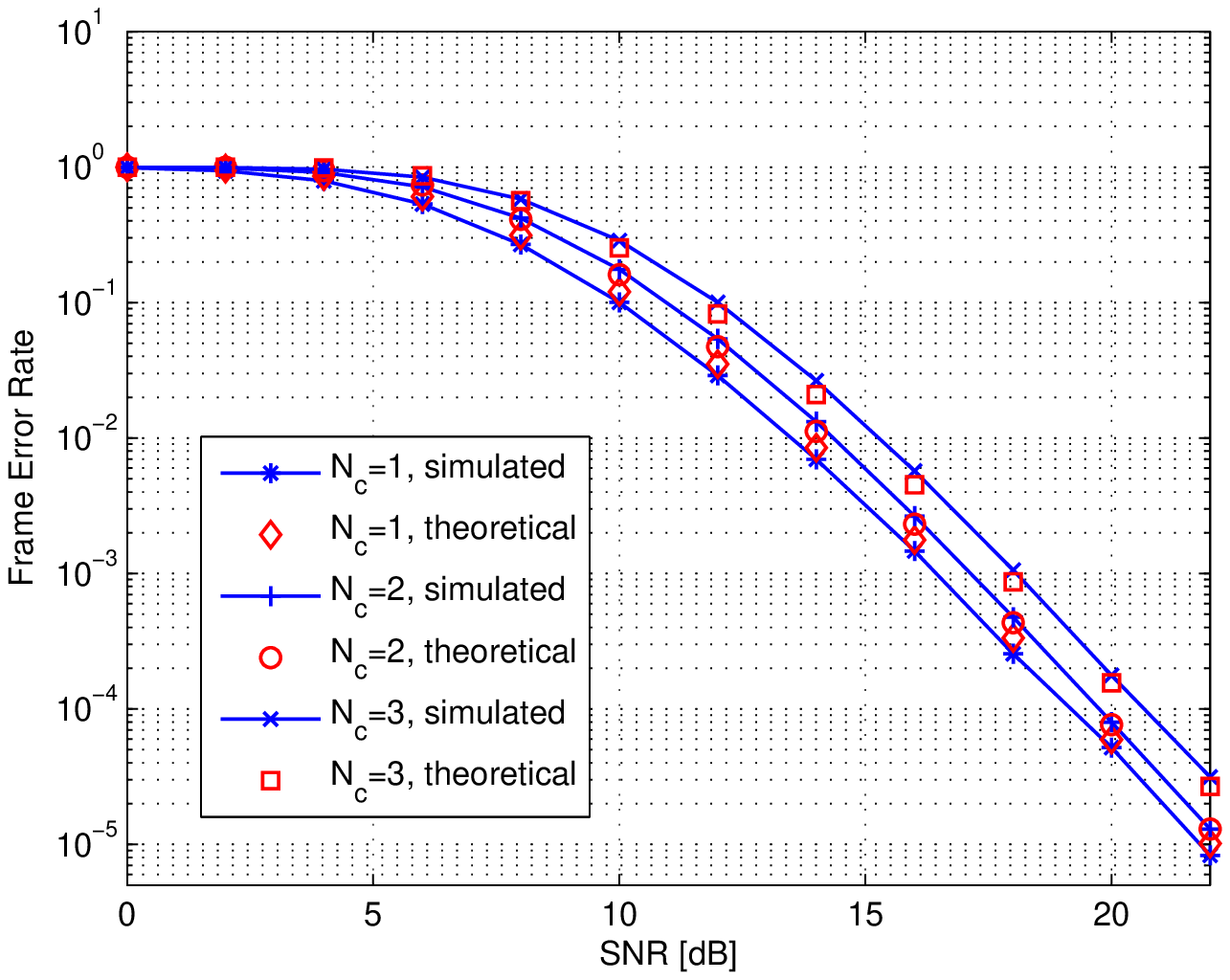}
\caption{Closed-form and simulated FER results of the SCHCN scheme under a total transmit power constraint: $N=3$, $\Omega_0=\Omega_{1i}=\Omega_{2i}=1$ and $L=100$.
}  \label{fig:Theory_Simulation_HF_3Relay_ClosedForm}
\end{figure}

\begin{figure}[!t]
\centering
\includegraphics[width=0.7\textwidth]{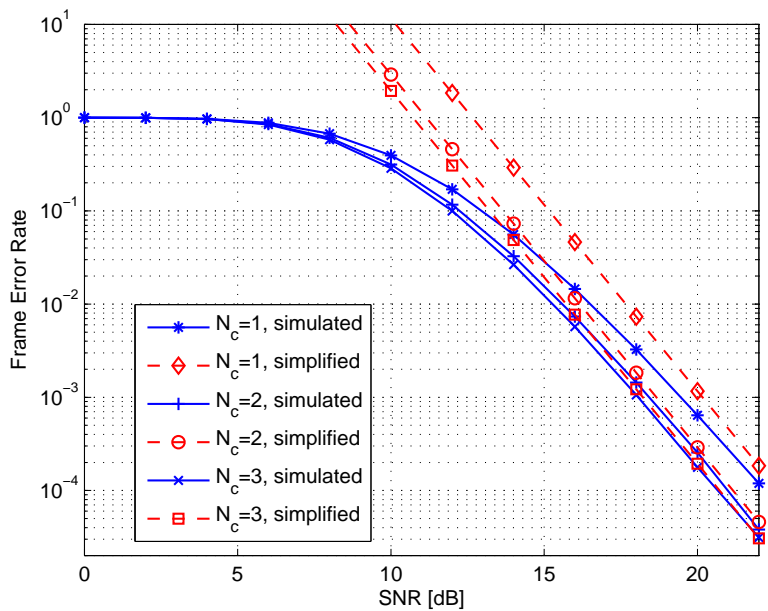}
\caption{Asymptotic and simulated FER results of the SCHCN scheme under an individual power constraint: $N=3$, $\Omega_0=\Omega_{1i}=\Omega_{2i}=1$ and $L=100$.
}  \label{fig:Theory_Simulation_HF_3Relay_Simplified_individual_power}
\end{figure}


\begin{figure}[!t]
\centering
\includegraphics[width=0.7\textwidth]{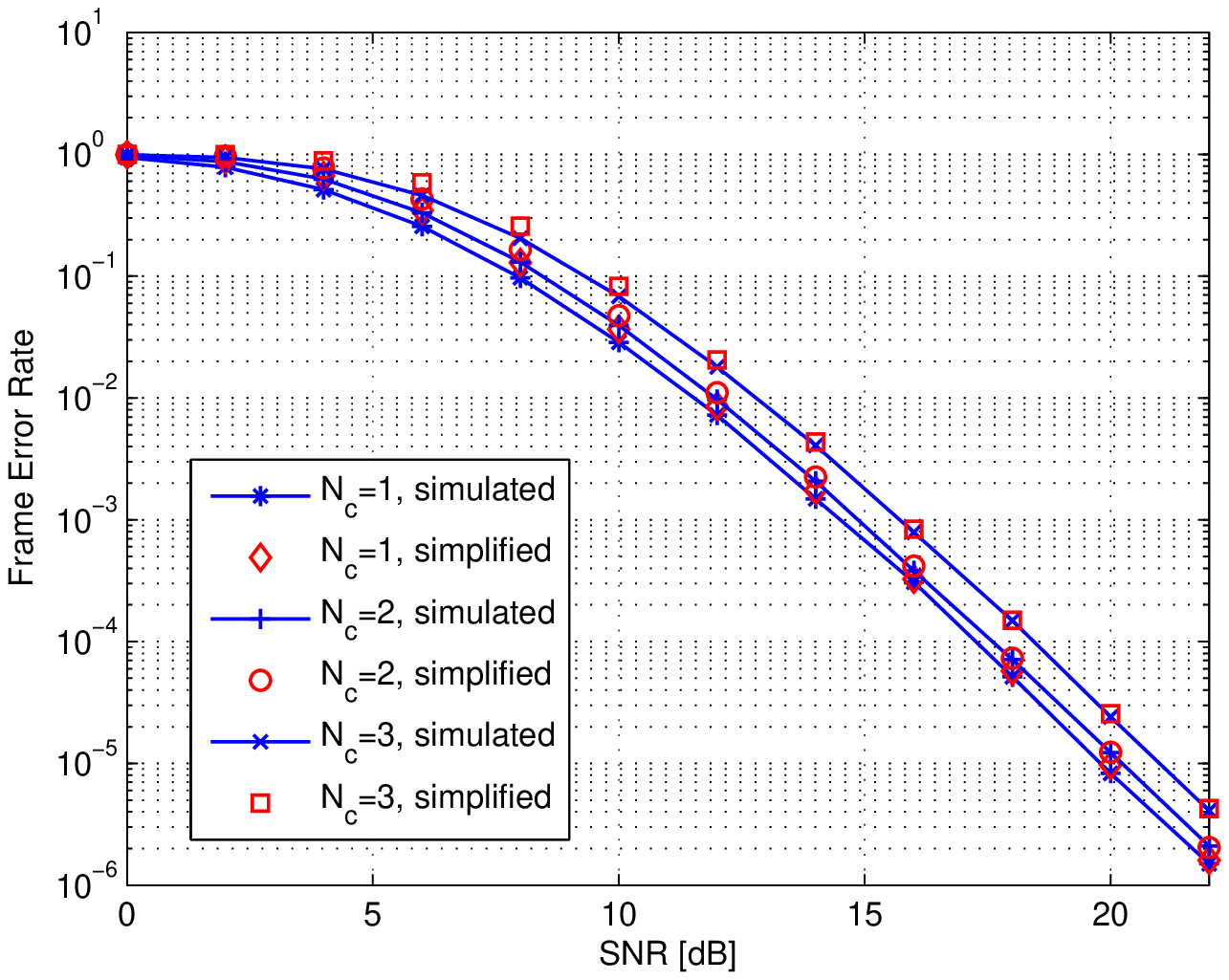}
\caption{Closed-form and simulated FER results of the SCHCN scheme under a total transmit power constraint: $N=3$, $\Omega_0=1$, $\Omega_{1i}=16$, $\Omega_{2i}=1$ and $L=100$.
}  \label{fig:Theory_Simulation_HDAF_3Relay_L_ClosedForm}
\end{figure}


\begin{figure}[!t]
\centering
\includegraphics[width=0.7\textwidth]{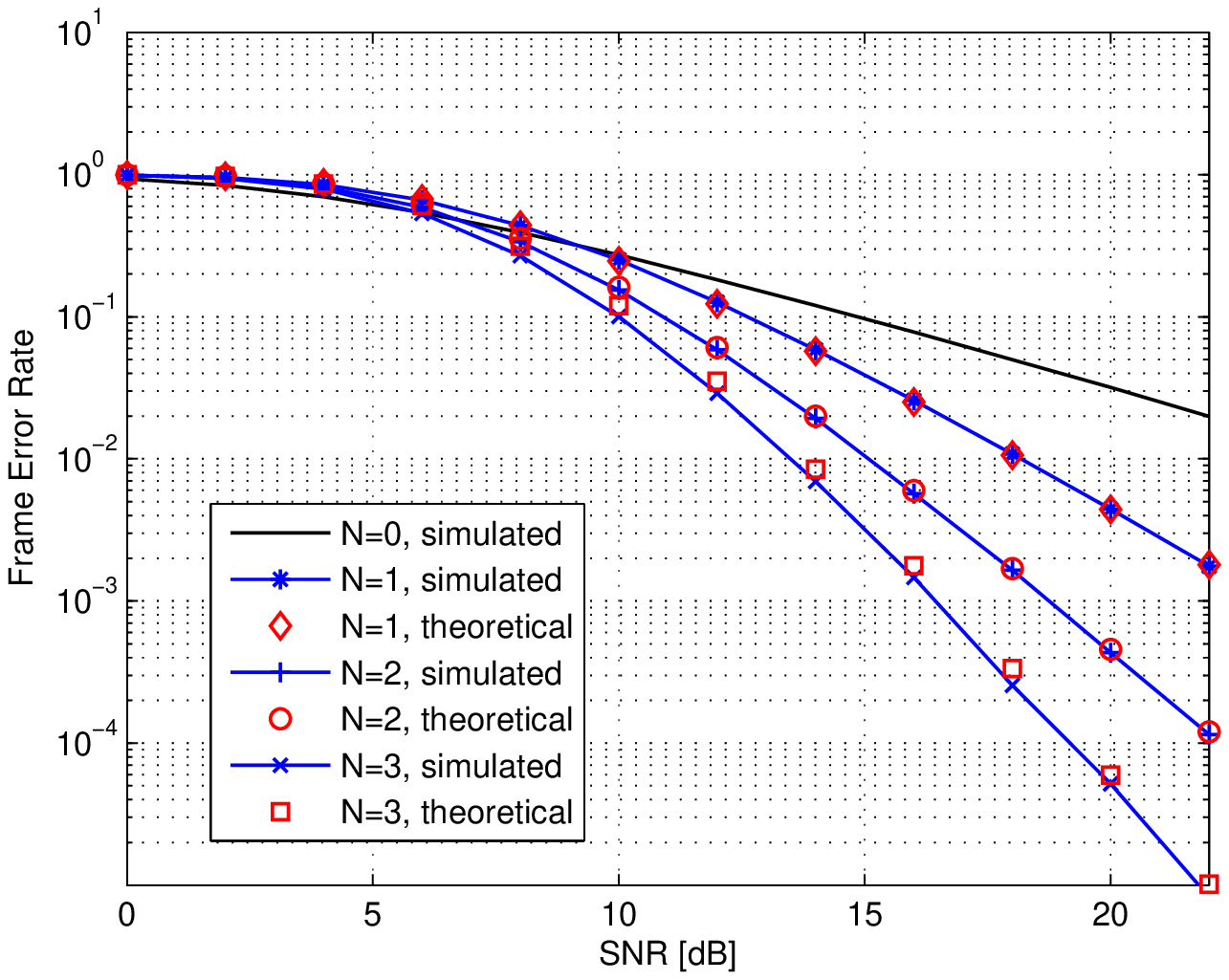}
\caption{Closed-form and simulated FER results of the SCHCN scheme under a total transmit power constraint ($N=1,2,3$, $N_c =1$) and non-cooperative system ($N=0$), $\Omega_0=\Omega_{1i}=\Omega_{2i}=1$ and $L=100$.
}  \label{fig:Theory_Simulation_HF_123Relay1_ClosedForm}
\end{figure}

\begin{figure}[!t]
\centering
\includegraphics[width=0.7\textwidth]{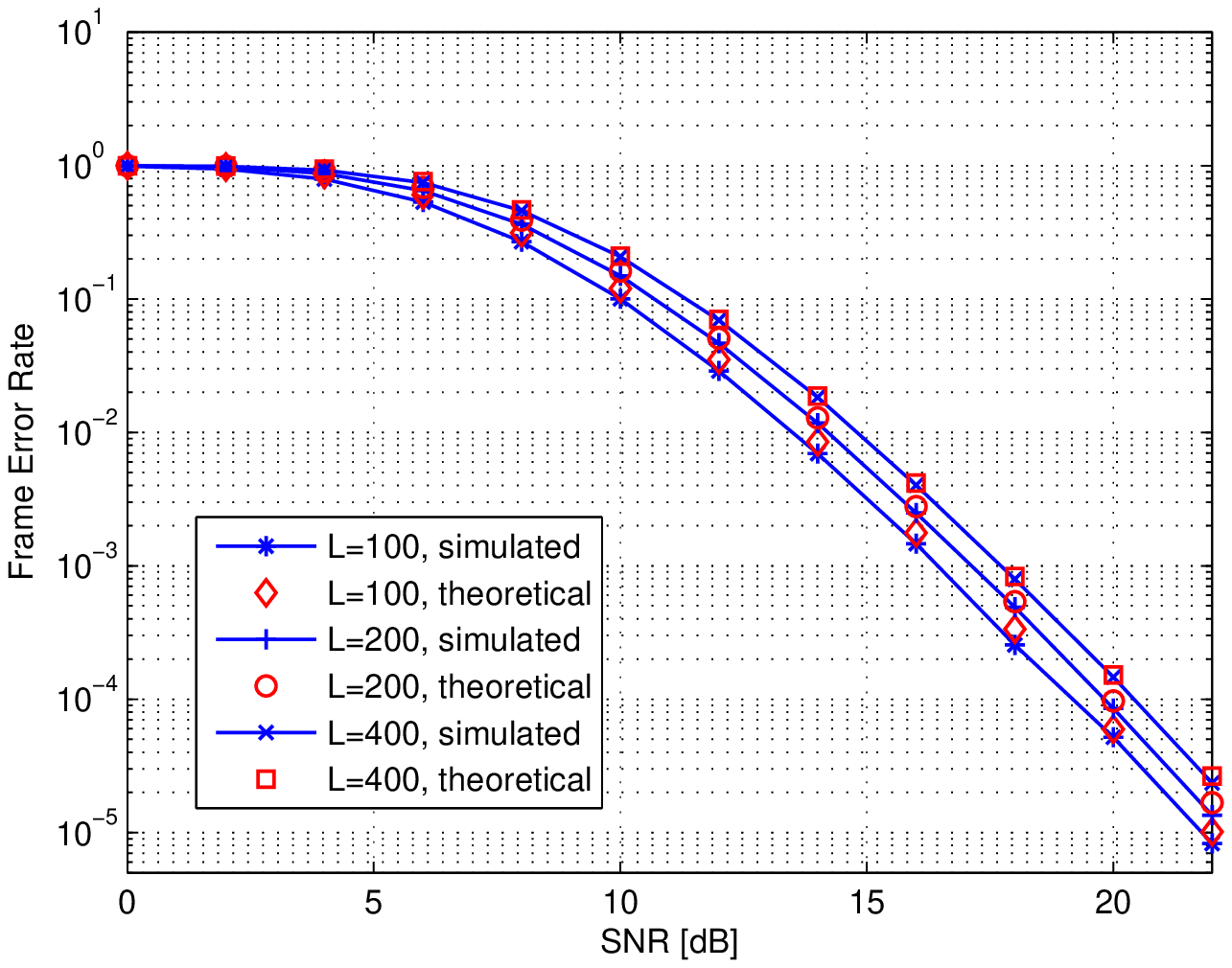}
\caption{Closed-form and simulated FER results of the SCHCN scheme under a total transmit power constraint: $N=3$, $N_c =1$, $\Omega_0=\Omega_{1i}=\Omega_{2i}=1$ and $L=100,200, 400$.
}  \label{fig:Theory_Simulation_HF_3Relay1_diff_L_ClosedForm}
\end{figure}

\end{document}